\newtheorem{definition}{Definition}
\newtheorem{theorem}[definition]{Theorem}
\newtheorem{lemma}[definition]{Lemma}
\newtheorem{corollary}[definition]{Corollary}
\begin{document}

\title{An Optimal Single-Path Routing Algorithm\\ in the Datacenter Network DPillar}

\author{Alejandro Erickson, Iain A. Stewart\\School of Engineering and Computing
  Sciences, Durham University,\\South Road, Durham DH1 3LE, U.K. \and Abbas Eslami Kiasari, Javier Navaridas\\School of Computer Science, University of Manchester,\\Oxford Road, Manchester M13 9PL, U.K.}

\date{}

\maketitle

\begin{abstract}
DPillar has recently been proposed as a server-centric datacenter network and is combinatorially related to (but distinct from) the well-known wrapped butterfly network. We explain the relationship between DPillar and the wrapped butterfly network before proving that the underlying graph of DPillar is a Cayley graph; hence, the datacenter network DPillar is node-symmetric. We use this symmetry property to establish a single-path routing algorithm for DPillar that computes a shortest path and has time complexity $O(k)$, where $k$ parameterizes the dimension of DPillar (we refer to the number of ports in its switches as $n$). Our analysis also enables us to calculate the diameter of DPillar exactly. Moreover, our algorithm is trivial to implement, being essentially a conditional clause of numeric tests, and improves significantly upon a routing algorithm earlier employed for DPillar. Furthermore, we provide empirical data in order to demonstrate this improvement. In particular, we empirically show that our routing algorithm improves the average length of paths found, the aggregate bottleneck throughput, and the communication latency. A secondary, yet important, effect of our work is that it emphasises that datacenter networks are amenable to a closer combinatorial scrutiny that can significantly improve their computational efficiency and performance.

\noindent\emph{Keywords\/}: datacenter networks, routing algorithms, shortest paths, symmetry.
\end{abstract}

\section{Introduction}\label{sec:intro}

Datacenters are assuming an increasingly important role in the global computational infrastructure. They provide platforms for a wide range of data-intensive applications and activities including web search, social networking, online gaming, large-scale scientific deployments and service-oriented cloud computing. There is an increasing demand that datacenters incorporate more and more servers, and do so in a cost-effective fashion, but still so that the resulting platform is computationally efficient (in various senses of the term).

A \emph{datacenter network\/} (DCN) comprises the physical communication infrastructure underpinning a datacenter. One of the main aspects of a datacenter network is the \emph{topology\/} by which the servers, switches and other components of the datacenter are interconnected; the choice of topology strongly influences the datacenter's practical performance (see, \emph{e.g.}, \cite{LMV13}). For simplicity, henceforth by DCN we refer to the datacenter network topology. Originally, DCNs were hierarchical with expensive core routers that became bottlenecks in terms of both performance and cost. They evolved into tree-like, \emph{switch-centric\/} DCNs, built from commodity-off-the-shelf (COTS) components; that is, so that the servers are located at the `leaves' of a tree-like structure that is composed entirely of switches and where the routing intelligence resides within the switches. Such DCNs can offer better load balancing capabilities and so are less prone to bottlenecks but have limited scalability due to (the size of) routing tables within the switches. Typical examples of such switch-centric DCNs are ElasticTree \cite{HSM06}, Fat-Tree \cite{ALV08}, VL2 \cite{GHJ09}, HyperX \cite{ABD09}, Portland \cite{MPF09} and Flattened Butterfly \cite{AMW10}. 

Alternative architectures have recently emerged and  \emph{server-centric\/} DCNs have been proposed whereby the interconnection intelligence resides within the servers as opposed to the switches. Now, switches only operate as dumb crossbars (and consequently the need for high-end switches is diminished as are the infrastructure costs). This paradigm shift means that more scalable topologies can be designed and the fact that routing resides within servers, which are easier to program than are switches, means that more effective routing algorithms can be adopted. However, server-centric DCNs are not a panacea as packet latency can increase, with the need to handle routing providing a computational overhead on the server. Typical examples of server-centric DCNs are DCell \cite{GWT08}, BCube \cite{GLL09}, FiConn \cite{LGW09}, CamCube \cite{ACR10}, MCube \cite{WWY10}, DPillar \cite{LYY12}, HCN and BCN \cite{GCL13} and SWCube, SWKautz, and SWdBruijn \cite{LW15}. An additional positive aspect of some server-centric DCNs is that not only can commodity switches be used to build the datacenters but commodity servers can too; the DCNs FiConn, MCube, DPillar, HCN, BCN, SWCube, SWKautz, and SWdBruijn are all such that any server only needs two NIC ports (the norm in commodity servers) in order to incorporate it into the DCN.

It is with the DCN DPillar that we are concerned here. DPillar is an established
and one of the most promising benchmark dual-port server-centric DCNs. Moreover,
DPillar is one of the even fewer dual-port server-centric DCNs for which no
server-node is adjacent to any other server-node, the others being SWKautz,
SWCube, and SWdBruijn. DPillar has recently been compared with other dual-port
server-centric DCNs \cite{LW15}. It was shown that when the diameter of the DCN
is normalized, DPillar can incorporate more servers than FiConn and BCN, a
similar number of servers to SWCube, and (usually) less servers than SWKautz and
SWdBruijn. However, DPillar, SWCube, SWKautz, and SWdBruijn were shown to have
similar bisection widths and all have better bisection widths than FiConn and
BCN. Whilst SWCube, SWKautz, and SWdBruijn were compared with each other in
\cite{LW15} with regard to aspects of routing in relation to fault-tolerance and
handling congestion, there was no comparison of these three DCNs with DPillar.
Such an evaluation is currently missing and would obviously be tied to a
particular routing algorithm for DPillar, an observation that we will return to
in a moment.

As we shall see, DPillar is essentially obtained by replacing complete bipartite subgraphs $K_{\frac{n}{2},\frac{n}{2}}$ in a wrapped butterfly network (see, \emph{e.g.}, \cite{Lei92}) with a switch with $n$ ports. In \cite{LYY12}, basic properties of DPillar are demonstrated and single-path and multi-path routing algorithms are developed (along with a forwarding methodology for the latter). Our focus here is on single-path routing (also known as single-source deterministic routing). The algorithm in \cite{LYY12} is appealing in its simplicity but for most source-destination pairs it does not produce a path of shortest length; indeed, there is often a significant discrepancy between the lengths of the path produced by the algorithm in \cite{LYY12} and a shortest path (as we demonstrate later). We remedy this situation and develop a single-path routing algorithm that always outputs a shortest path. Although the proof of correctness of our algorithm is non-trivial, the actual algorithm itself is a very simple sequence of numeric tests and has the same time complexity as the original single path routing algorithm, \emph{i.e.}, linear in the number of columns within DPillar. 

Furthermore, we undertake an empirical evaluation and show that according to our experiments, the original single path routing algorithm for DPillar from \cite{LYY12} fails to provide a shortest path route for more than 51\% and up to 78\% of the server pairs; this translates into our algorithm giving an improvement in the range of 20-30\% in terms of the average path length derived. Note that a reduction in path length not only means that the latency of the network traffic will be reduced (between 20 and 25\%, in our experiments), but also that as less resources are required for transmitting data, the overall throughput of the network should also increase. To verify this latter contention, we empirically measure the aggregate bottleneck throughput (the most widely accepted datacenter throughput metric) for both algorithms and we find that our algorithm yields improvements in the range of 25-120\%, with a mean of 65\% and a median of 75\%. The substantial improvements in average path length and throughput, together with the algorithmic simplicity of our proposal, more than motivates its utilization in production systems. As by-products of the development of our algorithm, we prove that the DCN DPillar is, in essence, a Cayley graph, and thus node-symmetric (that is, there is an automorphism mapping any server to any other server), and we obtain the diameter of the DCN DPillar exactly. 

Let us now return to our earlier remark as regards the current lack of a comparison in the literature of DPillar with SWCube, SWKautz, and SWdBruijn with respect to aspects of routing in relation to fault-tolerance and handling congestion. Were we to embark on this comparison prior to the results of our paper then we would be doing a disservice to DPillar as we would be working with the routing algorithm from \cite{LYY12} which we prove (and empirically validate) here to be significantly worse in all respects than the routing algorithm we develop in this paper. We intend in future to undertake an extensive evaluation of aspects of routing for dual-port server-centric DCNs including DPillar, SWCube, SWKautz, and SWdBruijn but thanks to the results of this paper, this will now be with respect to our improved routing algorithm for DPillar (of course, such an evaluation is beyond the scope of this paper).

In the next sections, we give an explicit definition of the DCN DPillar, both
algebraically and as a derivation from wrapped butterfly networks, before
showing how to abstract DPillar as a directed graph and proving that the
resulting directed graph is a Cayley graph; an immediate consequence is that the
DCN DPillar is node-symmetric. In Section~\ref{sec:routing}, and using the
newfound property of node-symmetry, we explain how solving the single-path
routing problem in our abstraction of DPillar can be further abstracted so that
it is equivalent to a routing problem in what we call a marked cycle, and in
Section~\ref{sec:routingmarked} we prove that shortest paths in this marked
cycle must have severe restrictions on their structure. We use these
restrictions to develop our single-path routing algorithm for DPillar in
Section~\ref{sec:algorithm} and establish its correctness and its time
complexity. To support our theoretical analysis, we provide empirical evidence
that the length of the (shortest) path obtained by our single-path routing
algorithm is significantly shorter than the length of the path obtained by the
single-path routing algorithm from \cite{LYY12} for many source-destination
pairs, and we calculate the diameter of DPillar explicitly. Our conclusions and
directions for further research are given in
Section~\ref{sec:conclusions}\footnote{Some results from this paper appeared in
  preliminary form in: A. Erickson, A. Kiasari, J. Navaridas and I.A. Stewart,
  An efficient shortest path routing algorithm in the data centre network
  DPillar, \emph{Proc. of 9th Ann. Int. Conf. on Combinatorial Optimization and
    Applications}, 2015, pp. 209--220; some proofs and results were omitted and
  there was no experimental evaluation.}.

\section{The DCN DPillar}\label{sec:def}

In this section, we explicitly define the DCN DPillar and explain how the DCN DPillar can be (informally) constructed from a wrapped butterfly network.

\subsection{A definition of DPillar}

The DCN DPillar \cite{LYY12} consists of a collection of switches, each of which has $n$ ports, with $n\geq 2$ even, and a collection of servers, each of which has $2$ NIC ports. The names of the servers are $\{(c,v_{k-1}v_{k-2}\ldots v_0): 0\leq c \leq k-1; 0\leq v_i \leq \frac{n}{2}-1; 0\leq i\leq k-1\}$ where $k\geq 2$ (we refer to $k$ as the \emph{dimension\/}): the first parameter, $c$, is the \emph{column-index\/} and denotes the \emph{column\/} in which the server resides, whilst the second parameter $v_{k-1}v_{k-2}\ldots v_0$ is the \emph{row-index\/} and denotes the server's position within a column (from the left, the bit positions are $k-1, k-2, \ldots,0$; note that we refer to the values as `bits' and their positions as `bit' positions). We denote the DCN DPillar with parameters $n$ and $k$, as above, by DPillar$_{n,k}$. Consequently, DPillar$_{n,k}$ has $k(\frac{n}{2})^k$ servers. 

We term the collections of servers `columns' as we visualize the servers within a column as being stacked vertically within that column, with the row-indices of the servers, from top to bottom, being given in increasing lexicographic order on $\{0,1,\ldots,\frac{n}{2}-1\}^k$; so, if $n=6$ and $k=4$, for example, then the ordering is given by $0000<0001<0002<0010<0011<0012<0020<\ldots$ and so on. There are $(\frac{n}{2})^{k-1}$ switches located between column $i$ and column $i+1$, for $i=0,1,\ldots,k-2$, and also between column $k-1$ and column $0$; thus, there are $k(\frac{n}{2})^{k-1}$ switches in DPillar$_{n,k}$. We think of the switches between two columns of servers as appearing in a column too, with the names of the switches in a column being $\{0,1,\ldots,\frac{n}{2}-1\}^{k-1}$ and again stacked from top to bottom in increasing lexicographic order. If a switch lies between server-column $c$ and server-column $c+1$, where $c\in\{0,1,\ldots,k-1\}$ and addition is modulo $k$, then we say that its \emph{column\/} is column $c$ (henceforth, we assume that addition and subtraction on the names of columns are always modulo $k$). The columns of servers and switches for DPillar$_{6,3}$ can be visualized as in Fig.~\ref{DPillar} (note that the servers in the right-most and left-most columns are identical but are shown separately to facilitate visualization).

\begin{figure*}[t]
\centering
\scalebox{1.0}[1.0]{
\includegraphics{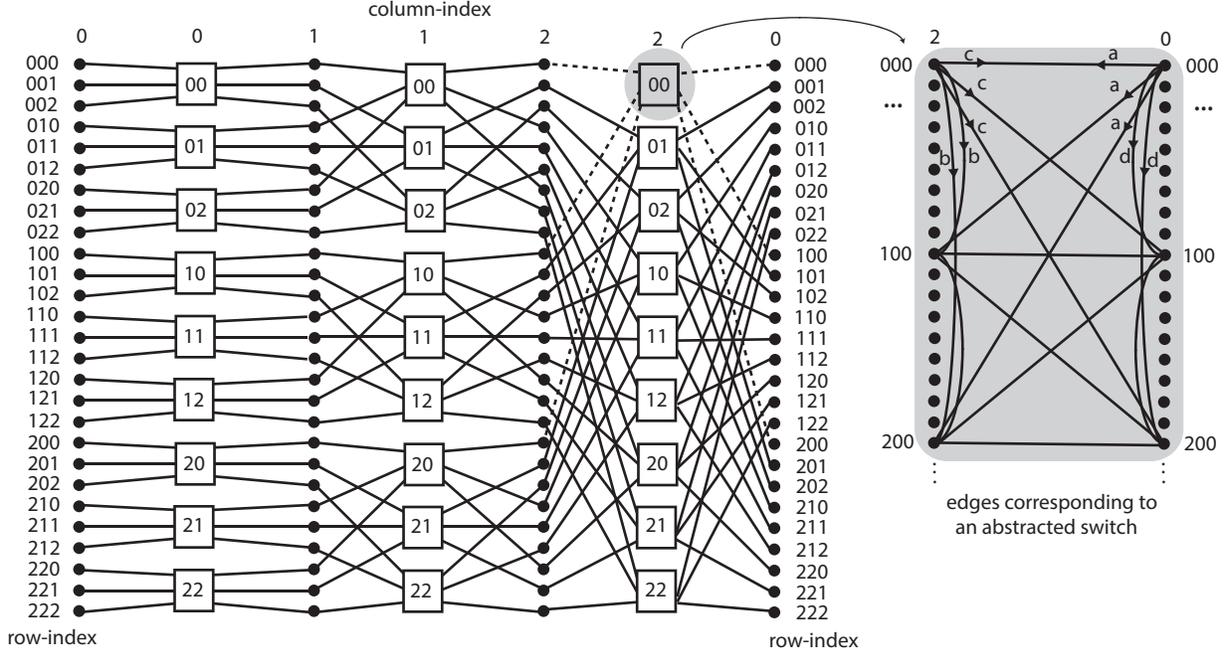}}
\caption{Visualizing DPillar$_{6,3}$.}\label{DPillar}
\end{figure*}

All links are server-switch links and are from a server in (server-)column $c$ to a switch in (switch-)column $c$ or from a server in (server-)column $c+1$ to a switch in (switch-)column $c$ (where $c\in\{0,1,\ldots,k-1\}$). Let $(c,v_{k-1}v_{k-2}\ldots v_0)$ be a server in column $c$. The switch to which it is connected in column $c$ is the switch named $v_{k-1}\ldots v_{c+1}v_{c-1}\ldots v_0$. If $(c+1,v_{k-1}v_{k-2}\ldots v_0)$ is a server in column $c+1$ then the switch to which it is connected in column $c$ is the switch named $v_{k-1}\ldots v_{c+1}v_{c-1}\ldots v_0$. So, for example, the server $(c,v_{k-1}\ldots v_{c+1}*v_{c-1}\ldots v_0)$, where $*$ denotes that we may substitute in any number from $\{0,1,\ldots,\frac{n}{2}-1\}$, is connected to the switch $v_{k-1}\ldots v_{c+1}v_{c-1}\ldots v_0$ in column $c$, which in turn is connected to the server $(c+1,v_{k-1}\ldots v_{c+1}*v_{c-1}\ldots v_0)$. Similarly, the server $(c,v_{k-1}\ldots v_{c}*v_{c-2}\ldots v_0)$ is connected to the switch $v_{k-1}\ldots v_{c}v_{c-2}\ldots v_0$ in column $c-1$, which in turn is connected to the server $(c-1,v_{k-1}\ldots v_{c}*v_{c-2}\ldots v_0)$. The server-switch links for DPillar$_{6,3}$ can be visualized as in Fig.~\ref{DPillar}.

An alternative informal definition of DPillar$_{n,k}$ can be given. With reference to Fig.~\ref{DPillar}, we can replace every switch with a complete bipartite graph $K_{\frac{n}{2},\frac{n}{2}}$ (the bipartition is the obvious one). What results is the well-known \emph{wrapped butterfly network\/} (see, \emph{e.g.}, \cite{Lei92}; this network has been well-studied within the context of multiprocessor systems). The primary difference between DPillar$_{n,k}$ and the resulting wrapped butterfly network is that a switch in DPillar$_{n,k}$ enables direct server-to-server communication between servers connected to the same switch and in the same column, whereas such communication is absent in the wrapped butterfly network.

\subsection{Abstracting DPillar}\label{subsect:absDPillar}

We can abstract DPillar$_{n,k}$ as a digraph as follows: the nodes of this graph are the servers of DPillar$_{n,k}$; and there is an edge from a source-node to a target-node if there is a link from the corresponding source-server to a switch and a link from that switch to the corresponding target-server (so, the edges correspond to server-switch-server paths). There are $4$ types of edges in the digraph abstracting DPillar$_{n,k}$:

\begin{itemize}
\item[(\emph{i\/})] \emph{clockwise edges\/} (\emph{c-edges\/}) which are edges of the form 
  \begin{align*}
    ((c,v_{k-1}\ldots v_{c+1}v_c v_{c-1}\ldots v_0),
    (c+1,v_{k-1}\ldots v_{c+1}*v_{c-1}\ldots v_0))
  \end{align*}
\item[(\emph{ii\/})] \emph{anti-clockwise edges\/} (\emph{a-edges\/}) which are edges of the form 
  \begin{align*}
    ((c,v_{k-1}\ldots v_{c}v_{c-1} v_{c-2}\ldots v_0),   (c-1,v_{k-1}\ldots v_{c}*v_{c-2}\ldots v_0))
  \end{align*}
\item[(\emph{iii\/})] \emph{basic static edges\/} (\emph{b-edge\/}s) which are edges of the form 
  \begin{align*}
    ((c,v_{k-1}\ldots v_{c+1}v_{c} v_{c-1}\ldots v_0),   (c,v_{k-1}\ldots v_{c+1}*v_{c-1}\ldots v_0))
  \end{align*}
\item [(\emph{iv\/})] \emph{decremented static edges\/} (\emph{d-edges\/}) which are edges of the form
  \begin{align*}
    ((c,v_{k-1}\ldots v_{c}v_{c-1}v_{c-2}\ldots v_0),   (c,v_{k-1}\ldots v_{c}*v_{c-2}\ldots v_0)).
  \end{align*}
\end{itemize}
So, within our abstraction of DPillar$_{n,k}$ as a digraph, the nodes are the
servers and are located in columns $0,1,\ldots,k-1$ (as before) with all edges
joining nodes in consecutive columns (clockwise and anticlockwise edges) or
nodes in the same column (static edges). In fact, our digraph (where each node
has in- and out-degree $2n-2$) can also be thought of as an undirected graph
(that is regular of degree $2n-2$) as all edges come in oppositely oriented
pairs. Note that the clockwise (resp.\ anti-clockwise, basic static, decremented
static) edge above corresponds to a server-switch-server path in the DCN
DPillar$_{n,k}$ from a column $c$ server through a column $c$ (resp.\ $c-1$, $c$,
$c-1$) switch and on to a column $c+1$ (resp.\ $c-1$, $c$, $c$) server.
Henceforth, we denote the digraph abstracting DPillar$_{n,k}$ by DPillar$_{n,k}$
too (this causes no confusion). The abstraction of DPillar can be visualized as
in Fig.~\ref{DPillar} where we show how the switch $00$ in column $2$ gives rise
to a set of edges in the abstraction of DPillar as a graph. We annotate edges as
follows: an edge annotated `a' is an anti-clockwise edge relative to the node
$(0,000)$ (the arrow on the edge from $(0,000)$ denotes that the label is with
respect to $(0,000)$); an edge annotated `b' is a basic static edge relative to
node $(2,000)$; an edge annotated `c' is a clockwise edge relative to node
$(2,000)$; and an edge annotated `d' is a decremented static edge relative to
node $(0,000)$ (so, an edge has two labels: one relative to one incident node;
and another relative to the other incident node). In short, for some node, the
adjacent switch `to the right' gives rise to b-edges and c-edges, and the one
`to the left' gives rise to a-edges and d-edges.

\section{DPillar is a Cayley Graph}\label{sec:Cayley}

In this section, we prove that the digraph DPillar$_{n,k}$ is a Cayley graph,
and consequently node-symmetric (we exploit this node-symmetry later on in our
single-path routing algorithm and in our experimental work). Recall that a graph
is a \emph{Cayley graph} if the nodes can be labelled with the elements of a
(algebraic) group $G$ and there is a generating subset $S\subseteq G$ that is
closed under inverses so that every directed edge $(u,v)$ is labelled with an
element of $s\in S$ if, and only if, $us=v$ (within the group $G$). We say that
a digraph is \emph{node-symmetric} if given any $2$ distinct nodes $src$ and
$dst$, there is an automorphism (that is, a one-to-one mapping $\varphi$ of the
node-set onto itself such that if $(u,v)$ is an edge then
$(\varphi(u),\varphi(v))$ is an edge) mapping $src$ to $dst$. It is well-known,
and trivial to prove, that every Cayley graph is node-symmetric. The first paper
to establish that being a Cayley graph is a useful property for an
interconnection network is \cite{AK89} and since then, there has been much
research into representing interconnection networks using finite groups. Not
only do we immediately obtain that any Cayley graph is node-symmetric (which is
a fundamental property of interconnection networks \cite{DT04}) but Cayley
graphs have been shown to be relevant to various networks in a variety of ways;
for example, with regard to the design of interconnection networks by pruning
nodes and edges from tori \cite{XB07}, the design of wireless DCNs \cite{SSW13},
and the design of high-dimensional mesh-based interconnection networks
\cite{CMB15}.

\subsection{DPillar Symmetry}\label{subsect:symDPillar}

Whilst it was stated in \cite{LYY12} that the DCN DPillar is `symmetric', it was not stated as to what `symmetric' meant (hence, there was no proof of `symmetry'). Our  main intention is to show that DPillar is node-symmetric (defined above) but we do this by proving that DPillar is a Cayley graph.

\begin{lemma}\label{lem:CayGra}
The digraph DPillar$_{n,k}$ is a Cayley graph.
\end{lemma}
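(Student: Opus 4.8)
The plan is to exhibit an explicit group $G$, a bijection between its elements and the nodes of DPillar$_{n,k}$, and a generating set $S\subseteq G$ (closed under inverses) so that the four families of edges are recovered exactly as right-multiplications by elements of $S$. The natural candidate for $G$ is the semidirect product $G=(\mathbb{Z}_{n/2})^k\rtimes\mathbb{Z}_k$, where a generator of $\mathbb{Z}_k$ acts on $(\mathbb{Z}_{n/2})^k$ by the cyclic coordinate shift $\sigma$ defined by $\sigma(e_i)=e_{i+1}$ (indices mod $k$, with $e_0,\ldots,e_{k-1}$ the standard coordinate directions). Writing an element as a pair $(\mathbf{v},c)$ with $\mathbf{v}\in(\mathbb{Z}_{n/2})^k$ and $c\in\mathbb{Z}_k$, the product is $(\mathbf{v},c)\cdot(\mathbf{w},d)=(\mathbf{v}+\sigma^{c}(\mathbf{w}),c+d)$. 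I identify the server $(c,v_{k-1}\ldots v_0)$ with $(\mathbf{v},c)$, where $\mathbf{v}=(v_0,\ldots,v_{k-1})$; this is a bijection since both sets have $k(n/2)^k$ elements.

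First I would write down the candidate generating set as four families matching the four edge types: the clockwise generators $(\lambda e_0,1)$, the anti-clockwise generators $(\mu e_{k-1},-1)$, the basic static generators $(\lambda e_0,0)$ with $\lambda\neq 0$, and the decremented static generators $(\mu e_{k-1},0)$ with $\mu\neq 0$, where $\lambda,\mu$ range over $\mathbb{Z}_{n/2}$. The crux is then the routine but delicate verification that right-multiplication reproduces the edges exactly. Using $\sigma^{c}(e_0)=e_c$ and $\sigma^{c}(e_{k-1})=e_{c-1}$, one computes $(\mathbf{v},c)\cdot(\lambda e_0,1)=(\mathbf{v}+\lambda e_c,\,c+1)$, which alters precisely the coordinate in position $c$ and advances the column, i.e.\ a c-edge; the other three families are handled identically and yield exactly the a-, b- and d-edges. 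Conversely every edge of each type arises as such a product, so the edge set of DPillar$_{n,k}$ coincides with $\{(u,us):u\in G,\ s\in S\}$.

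It remains to check that $S$ is closed under inverses and generates $G$. From $(\mathbf{w},d)^{-1}=(-\sigma^{-d}(\mathbf{w}),-d)$ I would verify that inversion sends the clockwise family to the anti-clockwise family (and vice versa) and fixes each of the two static families, matching the paper's earlier observation that edges occur in oppositely oriented pairs (a c-edge at one endpoint being an a-edge at the other). Generation is immediate: products of clockwise generators already realise every vector $\sum_i\lambda_i e_i$ together with every column shift, so $S$ generates $G$. With all three conditions established, $G$, $S$ and the bijection witness that DPillar$_{n,k}$ is a Cayley graph. I expect the main obstacle to be purely bookkeeping: fixing the direction of $\sigma$ and the coordinate indexing so that the coupling between which coordinate may change and how the column moves (position $c$ for b- and c-edges, position $c-1$ for a- and d-edges) falls out correctly, and confirming that the induced inverse-pairing of the generator families is consistent.
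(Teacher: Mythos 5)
Your proposal is correct and is, in essence, the paper's own proof in a cleaner wrapper. The paper (following Fu and Chau's treatment of the cyclic cube) builds its group out of formal strings $t_i^{p_i}t_{i+1}^{p_{i+1}}\cdots t_{i-1}^{p_{i-1}}$ with a right-multiplication defined generator-by-generator; under the correspondence $t_i^{p_i}\cdots t_{i-1}^{p_{i-1}}\leftrightarrow(\mathbf{p},i)$ that group and its four generator families $S_a,S_b,S_c,S_d$ are exactly your semidirect product $(\mathbb{Z}_{n/2})^k\rtimes\mathbb{Z}_k$ with your four families, and the edge-type verifications are the same computations. The one thing your packaging buys is that associativity, the identity, and inverses come for free from the standard semidirect-product construction, whereas the paper must extend its ad hoc string multiplication to arbitrary products and then assert these group axioms as ``trivial to check''; your version also makes the inverse-pairing (clockwise with anti-clockwise, each static family with itself) and the degree count $|S|=2n-2$ transparent.
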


\begin{proof}
  Our proof is related to the proof in \cite{FC98} that the wrapped butterfly
  network (called the \emph{cyclic cube} in \cite{FC98}) is a Cayley graph.
  The full proof can be found in the supplemental material.
\end{proof}

We obtain the immediate corollary.

\begin{corollary}\label{lem:sym}
The digraph DPillar$_{n,k}$ is node-symmetric.
\end{corollary}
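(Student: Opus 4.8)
The plan is to derive node-symmetry directly from Lemma~\ref{lem:CayGra}, which states that DPillar$_{n,k}$ is a Cayley graph. This is precisely the route the excerpt signals when it remarks that ``it is well-known, and trivial to prove, that every Cayley graph is node-symmetric.'' So the corollary is not a fresh combinatorial argument about DPillar at all; it is an instance of the general fact that Cayley graphs are vertex-transitive. Accordingly, I would state the general lemma and specialize it.

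First I would recall the set-up from the definition of a Cayley graph given in the excerpt: the node-set is identified with a group $G$, there is a generating set $S\subseteq G$ closed under inverses, and $(u,v)$ is an edge labelled by $s\in S$ exactly when $us=v$. Given two distinct nodes $src,dst\in G$, the natural candidate automorphism is left multiplication by the group element $g=dst\,(src)^{-1}$, that is, the map $\varphi\colon G\to G$ defined by $\varphi(u)=gu$. I would check the three things required by the definition of an automorphism in the excerpt: that $\varphi$ is a bijection, that it preserves edges, and that it sends $src$ to $dst$. Bijectivity is immediate because left multiplication in a group is invertible, with inverse $u\mapsto g^{-1}u$. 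For the edge-preservation property, suppose $(u,v)$ is an edge, so $us=v$ for some $s\in S$; then $\varphi(u)s=(gu)s=g(us)=gv=\varphi(v)$, so $(\varphi(u),\varphi(v))$ is again an edge labelled by the same $s\in S$. Finally $\varphi(src)=g\,src=dst\,(src)^{-1}src=dst$, as required.

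One small subtlety to flag is that the Cayley-graph labelling in the excerpt uses \emph{right} multiplication by generators ($us=v$), so the symmetry-producing automorphism must be \emph{left} multiplication ($u\mapsto gu$); this is exactly what makes $\varphi$ commute with the edge relation, since $g$ and $s$ act on opposite sides. I would make this left/right distinction explicit to avoid the common pitfall of trying to use right multiplication, which would not in general preserve the edge labels. Beyond this bookkeeping, there is no genuine obstacle here: the entire content of the corollary has already been discharged by Lemma~\ref{lem:CayGra}, and the proof is a two-line verification.

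In practice the cleanest write-up is simply to invoke the general principle and the lemma. I would write: by Lemma~\ref{lem:CayGra}, DPillar$_{n,k}$ is a Cayley graph, and every Cayley graph is node-symmetric via the automorphisms given by left multiplication, as verified above; hence DPillar$_{n,k}$ is node-symmetric. The only thing requiring any care is whether to reprove the general fact or to cite it, and given the excerpt already calls it ``well-known, and trivial to prove,'' a one-sentence justification pointing to the left-multiplication map suffices.
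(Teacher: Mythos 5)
Your proposal is correct and matches the paper's approach exactly: the paper derives the corollary immediately from Lemma~\ref{lem:CayGra} together with the well-known fact that every Cayley graph is node-symmetric, which is precisely the left-multiplication argument you spell out. Your explicit verification (and the left/right multiplication caveat) is a sound elaboration of what the paper leaves as ``well-known, and trivial to prove.''
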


\section{Abstracting routing in DPillar}\label{sec:routing}

In this section, we abstract the problem of finding a path in the digraph DPillar$_{n,k}$ from a given source-node to a given destination-node so that ultimately this problem is equivalent to finding a path from a source-node to a destination-node in a cycle of length $k$ but where the actual node-to-node moves are more complicated than in a digraph. We also explain the single-path routing algorithm from \cite{LYY12}.

\subsection{Fixing bits}

It is important to appreciate what might be accomplished by moving along one of
the $4$ different types of edge highlighted above. Suppose that we are
attempting to move from some source-node $src$ to some destination-node $dst$
within DPillar$_{n,k}$ and that we are currently at some node in column $c$. We
can choose a clockwise (resp.\ anti-clockwise, basic static, decremented static)
edge so as to set the $c$th (resp.\ $(c-1)$th, $c$th, $(c-1)$th) bit in the
row-index to whatever value from $\{0,1,\ldots,\frac{n}{2}-1\}$ that we like.
Consequently, by choosing a clockwise (resp.\ anti-clockwise, basic static,
decremented static) edge along which to move, we can `fix' the $c$th (resp.
$(c-1)$th, $c$th, $(c-1)$th) bit of the row-index so that it matches that of the
destination-node. We say that: a clockwise edge \emph{covers\/} the column in
which its source-node lies; an anti-clockwise edge \emph{covers\/} the column in
which its target-node lies; a basic static edge \emph{covers\/} the column in
which both its source- and target-nodes lie; and a decremented static edge
\emph{covers\/} the column that is adjacent in an anti-clockwise direction to
the column in which both its source- and target-nodes lie. Thus, if we wish to
move along some path from $src$ to $dst$ then we need to ensure that we move
from column to column so as to fix all of the bits of the row-index that need
fixing, but so that we don't subsequently `unfix' them, \emph{and\/} so that we
end up in the column within which $dst$ resides (with regard to not `unfixing' a
bit, note that we can always move from a node in one column to a node in an
adjacent column so that the row-index remains unchanged). This is equivalent to
moving from column to column so that every row-index bit-position, \emph{i.e.},
column, where the bit values of $src$ and $dst$ differ is necessarily covered by
some edge and so that we end up in the column within which $dst$ resides. If we
are looking for a shortest path from $src$ to $dst$ then we have to do this
using as few moves as possible. Of course, any path of length $l$ in our
abstraction of DPillar$_{n,k}$ as a digraph translates to a path consisting of
$l$ server-switch-server link-pairs in the DCN DPillar$_{n,k}$, and \emph{vice
  versa\/} (for the sake of uniformity, we measure the length of
server-to-server paths in the DCN DPillar in terms of the number of
server-switch-server link-pairs in the path; this is also common practice in the
DCN community).

As an illustration, suppose we are at $(1,12530)$ in DPillar$_{6,5}$ and wish to get to the destination $(4,54314)$. If $x$ denotes any element of $\{0,1,2,3,4,5\}$, there is: an anti-clockwise edge taking us to $(0,1253x)$; a basic static edge taking us to $(1,125x0)$; a clockwise edge taking us to $(2,125x0)$; and a decremented static edge taking us to $(1,1253x)$. Given our destination, when we move we can choose $x$ accordingly and fix the appropriate bit so that we move: via an anti-clockwise edge to $(0,1253\underline{4})$; via a basic static edge to $(1,125\underline{1} 0)$; via a clockwise edge to $(2,125\underline{1}0)$; or via a decremented static edge to $(1,1253\underline{4} )$.

\subsection{Another abstraction}

A crucial observation arising from the above discussion is that when routing in DPillar$_{n,k}$, the actual value of some bit in a row-index of some node is unimportant: what matters is whether this value is equal to or different from the value of the corresponding bit in the row-index of the destination-node (that is, whether the bit needs to be `fixed' or not). Consequently, in order to solve the problem of finding a path from $src$, which lies in column $src^\prime$, to $dst$, which lies in column $dst^\prime$, in DPillar$_{n,k}$, we can abstract the problem as a (more involved) routing problem in the following digraph $G_{n,k}(src^\prime,dst^\prime)$:
\begin{itemize}
\item we think of there being one node for each of the $k$ columns of nodes of DPillar$_{n,k}$ with nodes in $G_{n,k}(src^\prime,dst^\prime)$ that correspond to adjacent columns being joined by an oppositely oriented pair of edges (so, we can also think of $G_{n,k}(src^\prime,dst^\prime)$ as an undirected cycle of length $k$)
\item we \emph{mark\/} every node $c$, corresponding to some column $c$ (or, alternatively, some bit-position $c$ in the row-index of some node of DPillar$_{n,k}$) that needs to be covered (because bit $c$ of the row-index of $src$ is different from bit $c$ of the row-index of $dst$), with the set of marked nodes being denoted by $B$
\item we move from node to node in $G_{n,k}(src^\prime,dst^\prime)$, starting at the node $src^\prime$ so as to end at the node $dst^\prime$ and making \emph{moves\/} where:
\begin{itemize}
\item[(\emph{i\/})] a \emph{c-move\/} means we move from node $c$ to node $c+1$ and such a move \emph{covers\/} node $c$
\item[(\emph{ii\/})] an \emph{a-move\/} means we move from node $c$ to node $c-1$ and such a move \emph{covers\/} node $c-1$
\item[(\emph{iii\/})] a \emph{b-move\/} means we stay at node $c$ and such a move \emph{covers\/} node $c$
\item[(\emph{iv\/})] a \emph{d-move\/} means we stay at node $c$ and such a move \emph{covers\/} node $c-1$
\end{itemize}
\end{itemize}
(note the correspondence between the above moves and the edge types given in Section~\ref{subsect:absDPillar}). 
We call $G_{n,k}(src^\prime,dst^\prime)$ a \emph{marked cycle\/}. Note that it might be the case that $src^\prime=dst^\prime$ in $G_{n,k}(src^\prime,dst^\prime)$ (this would mean that the nodes $src$ and $dst$ lie in the same column in DPillar$_{n,k}$).

With regard to our illustration in the previous section, the edge from $(1,12530)$: to $(0,12534)$ results in an a-move covering node $0$ in the marked cycle; to $(1,12510)$ results in a b-move covering node $1$ in the marked cycle; to $(2,12510)$  results in a c-move covering node $1$ in the marked cycle; and to $(1,12534)$ results in a d-move covering node $0$ in the marked cycle.

It should be clear as to how moves in the marked cycle $G_{n,k}(src^\prime,dst^\prime)$ correspond to moves along corresponding edges in DPillar$_{n,k}$ (and so to server-switch-server link-pairs in the DCN DPillar$_{n,k}$) with the coverage of a node in $G_{n,k}(src^\prime,dst^\prime)$ and a node of DPillar$_{n,k}$ being in direct correspondence. A \emph{path\/} in $G_{n,k}(src^\prime,dst^\prime)$ is a sequence of moves leading from $src^\prime$ to $dst^\prime$ and corresponds to a path in DPillar$_{n,k}$ from node $src$ to node $dst$ (and \emph{vice versa\/}) with the lengths of the two paths being identical. Consequently, in order to find a shortest path from $src$ to $dst$ in the DCN DPillar$_{n,k}$, it suffices to find a shortest path in the marked cycle $G_{n,k}(src^\prime,dst^\prime)$ (from the node $src^\prime$ to the node $dst^\prime$) so that every marked node is covered by a move. Note that if $src^\prime=dst^\prime$ then the empty sequence of moves does not constitute a legitimate path.

\subsection{Basic routing in DPillar}

Before we continue, let us discuss the single-path routing algorithm for DPillar as detailed in \cite{LYY12}; we refer to this algorithm as DPillarSP. The routing algorithm DPillarSP operates in $2$ phases: in the first phase (the so-called `helix' phase), a path in the DCN DPillar$_{n,k}$ is chosen so that movement is always in a clockwise direction (that is, the column-index is always incremented) or always in an anti-clockwise direction (that is, the column-index is always decremented) in order that the row-index is `fixed' so that it is identical to that of the destination-node; and in the second phase (the so-called `ring' phase), a path is subsequently chosen so as to reach the destination-node without amending the row-index and so that movement is in the same direction as in the first phase. Although not explicitly mentioned when discussing their algorithm, it is clear that the time complexity of the single-path routing algorithm from \cite{LYY12} is $O(k)$ (we have suppressed the $\log n$ component required to represent each bit-value).

It is stated in \cite[Section 3.1]{LYY12} that this single-direction movement is so that `loops' might be avoided. While this statement was not explained further, it is probable that what was meant by `loops' was a loop within a single route for a source-destination pair. Of course, our shortest-path routing algorithm means that loops in a single path will never occur. Alternatively (though unlikely), the rationale for the decision in \cite{LYY12} to restrict to single-direction movement might have been to avoid either network-level deadlock or livelock due to dependency loops (see, \emph{e.g.}, \cite[Ch. 14]{DT04}). Irrespective of the intentions in \cite{LYY12}, it is worth commenting on the potential for deadlocks in DPillar and server-centric DCNs in general. Given that the topology of DPillar is basically a sophisticated ring of columns, moving in a single direction does not completely prevent dependency loops from appearing. We give an example in Fig.~\ref{deadlock} where there is a (bold) route from $(0,000)$ to $(2,200)$ and a (dotted) route from $(1,200)$ to $(1,000)$ so that there is a cyclic dependency graph, due to the shared switches $(0,00)$ and $(1,20)$, even though we are using single-direction routing.  Nevertheless, there are many reasons to believe that, in the context of server-centric DCNs based on COTS hardware and software (\emph{i.e.}, Ethernet hardware and TCP/IP stack), network level deadlocks should be a minor concern. First, commodity Ethernet hardware uses packet-switching which prevents network frames from spreading across many network components; therefore a cyclic dependency between frames is unlikely to happen. Second, servers have virtually unlimited memory (and indeed, many orders of magnitude more than switches); hence we can assume infinite FIFOs at the servers. Considering that one of the necessary conditions for deadlocks to appear is for FIFOs to become full, it is, again, very unlikely that we end up in a deadlock situation. Finally, in the very unlikely situation of a cyclic dependency appearing and all the FIFOs becoming full, the packet-dropping mechanism of Ethernet-based hardware provides seamless deadlock recovery, whereas TCP ensures data delivery. The upshot is that deadlocks are not a primary concern in DCNs.

\begin{figure}[htb]
	\centering
		\includegraphics[width=.8\linewidth]{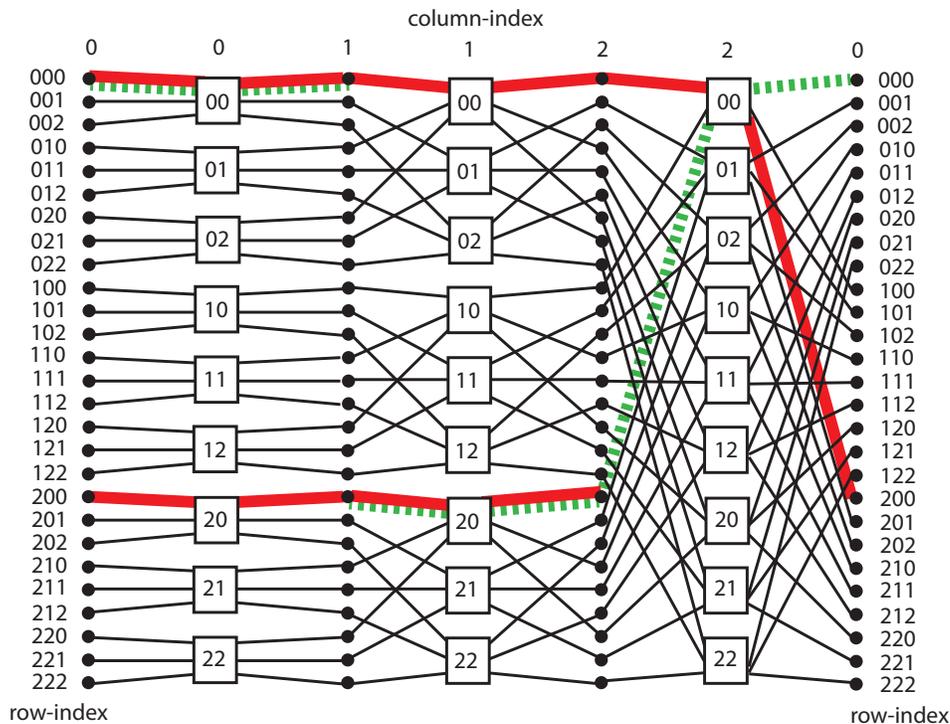}
	\caption{A dependency loop between two routes with DPillarSP. Server (0, 000) sends to (2, 200) and server (1, 200) sends to (1, 000). The paths through switches (0, 00) and (1, 20) are conflicted.}
	\label{deadlock}
\end{figure}

It is very easy to see (by looking at some typical source-destination examples) that the routing algorithm DPillarSP is by no means optimal and that more often than not much shorter paths exist (an upper bound of $2k - 1$ on the lengths of paths produced was stated in \cite{LYY12}). For example, if one chooses to route in a clockwise fashion in DPillar$_{n,k}$ with the source $(0,00\ldots 0)$ and the destination $(1,10\ldots 0)$ then the DPillarSP yields a path of length $k + 1$, and if one routes in an anti-clockwise fashion then the algorithm also yields a path of length $k - 1$; however, a shortest path has length $2$ (a d-move followed by a c-move). Our contention is that by relaxing this insistence on single-direction movement, we can obtain a much improved routing algorithm; indeed, as we shall see, we develop an optimal single-path routing algorithm (where the implementation overheads are negligible and where there are significant practical benefits).

\section{Routing in a marked cycle}\label{sec:routingmarked}

We begin by making some initial observations as regards routing along a shortest path (from $src^\prime$ to $dst^\prime$) in a marked cycle $G_{n,k}(src^\prime,dst^\prime)$ before proving that any such shortest path has a restricted structure.

\subsection{Some initial observations}

Henceforth, $\rho$ is a shortest path from $src^\prime$ to $dst^\prime$ in $G_{n,k}(src^\prime,dst^\prime)$. Consider two consecutive moves in $\rho$. We can often rule out consecutive pairs of moves. For example, suppose that we have within $\rho$ a c-move followed by an a-move. We can replace this pair within $\rho$ by a b-move so as to obtain a path with identical coverage to $\rho$ and which is shorter. This yields a contradiction. Similarly, suppose that we have an a-move followed by a c-move within $\rho$. We can replace this pair within $\rho$ by a d-move so as to again obtain a contradiction. In Table~\ref{tablemoves}, we detail all pairs of consecutive moves in $\rho$ that are forbidden by including the substitution that would result in a shorter path that has equivalent coverage. In this table, the first move is detailed in the rows and the second move in the columns. A blank cell means that the corresponding pair of moves cannot immediately be ruled out.

\begin{table}[ht]
\caption{Disallowed pairs of moves.}
\centering
\begin{tabular}{|l | c | c | c | c|}
\hline
& a-move & b-move & c-move & d-move\\
\hline
a-move &  & a-move & d-move &  \\
b-move &  & b-move & c-move &  \\
c-move & b-move &  &  & c-move \\
d-move & a-move &  &  & d-move \\
\hline
\end{tabular}
\label{tablemoves}
\end{table}

For clarity, rather than say, for example, `a c-move followed by an a-move', in future we will simply write $ca$ to denote this circumstance. Consequently, subsequences of moves within $\rho$ will be written as strings over $\{a,b,c,d\}$ (as will $\rho$ itself) and we compress subsequences of the same symbol, such as $aaaa$, by using powers, such as $a^4$.

We can say more. If we have a subsequence of moves $bd$ then this has the same effect as the subsequence $db$, and so we may suppose that a subsequence $db$ within $\rho$ is forbidden. Also, note that if $\rho$ has length at least $3$ then we cannot have a subsequence $bd$: 
\begin{itemize}
\item a subsequence $bdb$ can be replaced by $bd$; a subsequence $bdc$ can be replaced by $dc$; and we cannot have a subsequence $da$ or $dd$
\item a subsequence $cbd$ can be replaced by $cb$; a subsequence $dbd$ can be replaced by $db$; and we cannot have a subsequence $ab$ or $bb$.
\end{itemize}
Consequently, if $\rho$ has length at least $3$ then:
\begin{itemize}
\item if a c-move is not the final move of $\rho$ then it must be followed by another c-move or a b-move
\item if an a-move is not the final move of $\rho$ then it must be followed by another a-move or a d-move
\item if a b-move is not the final (resp.\ first) move of $\rho$ then it must be followed by an a-move (resp.\ preceded by a c-move)
\item if a d-move is not the final (resp.\ first) move of $\rho$ then it must be followed by a c-move (resp.\ preceded by an a-move).
\end{itemize}
Consequently, if $\rho$ has length at least $3$ then it must be of one of two forms: \begin{itemize}
\item[(1)] possibly a d-move (but maybe not) followed by a sequence of c-moves followed by a b-move followed by a sequence of a-moves followed by a d-move followed by a sequence of c-moves followed by $\dots$ followed by a sequence of c-moves (resp.\ a-moves) possibly followed by a b-move (resp.\ d-move); that is, $$d^\epsilon c^{i_1}ba^{j_1}dc^{i_2}\ldots c^{i_m}b^\delta\mbox{ or }d^\epsilon c^{i_1}ba^{j_1}dc^{i_2}\ldots a^{j_m}d^\delta,$$ for some $m\geq 1$, where $i_1,i_2,\ldots,i_m,j_1,j_2,\ldots,j_m \geq 1$ and where $\epsilon,\delta\in\{0,1\}$
\item[(2)] possibly a b-move followed by a sequence of a-moves followed by a d-move followed by a sequence of c-moves followed by a b-move followed by a sequence of a-moves followed by $\dots$ followed by a sequence of a-moves (resp.\ c-moves) possibly followed by a d-move (resp.\ b-move); that is, $$b^\epsilon a^{i_1}dc^{j_1}ba^{i_2}\ldots a^{i_m}d^\delta\mbox{ or }b^\epsilon a^{i_1}dc^{j_1}ba^{i_2}\ldots c^{j_m}b^\delta,$$ for some $m\geq 1$, where $i_1,i_2,\ldots,i_m,j_1,j_2,\ldots,j_m \geq 1$ and where $\epsilon,\delta\in\{0,1\}$
\end{itemize}
(when we say `sequence', above, we mean `non-empty sequence').

\subsection{Restricting the number of turns} \label{sec:turns}

If we have a subsequence $cba$ in $\rho$ then we say that an \emph{anti-clockwise turn}, or simply an \emph{a-turn}, occurs at the b-move; similarly, if we have a subsequence $adc$ then we say that a \emph{clockwise turn}, or simply a \emph{c-turn}, occurs at the d-move. Note that if we have an a-turn in $\rho$ then the node at which this turn occurs, \emph{i.e.}, the node that is covered by the d-move, must be marked in $G_{n,k}(src^\prime,dst^\prime)$ as otherwise we could delete the corresponding d-move from $\rho$ and still have a sequence from $src^\prime$ to $dst^\prime$ covering all the marked nodes, which would yield a contradiction. Similarly, if we have a c-turn then the node at which this c-turn occurs, \emph{i.e.}, the node that is covered by the b-move, must be marked. We will use these observations later; but now we prove that any shortest path $\rho$ must contain at most $2$ turns. 

Suppose that $\rho$ is a shortest path and has at least $3$ turns. What we do now is undertake a case by case analysis of the different configurations that might arise. These cases arise from the forms derived at the end of the previous subsection: the first two cases correspond to form (1) and the next two cases to form (2). The technique employed in each case is to modify the path $\rho$, by replacing sequences of moves within $\rho$, so as to obtain a new path that has the same coverage but is shorter; this yields a contradiction to our assumption that $\rho$ has at least $3$ turns.\smallskip

\noindent\underline{Case (\emph{a\/})}: Suppose that $\rho$ is of form (1) and has a prefix $\rho^\prime$ of the form $c^iba^jdc^lba$, where $i,j,l\geq 1$.\smallskip

\noindent By this we mean that $\rho$ begins with $i$ c-moves followed by a b-move followed by $j$ a-moves followed by a d-move followed by $l$ c-moves followed by a b-move followed by an a-move. 

If $j < i$ then we can replace the prefix $c^iba^jdc$ in $\rho^\prime$ with $c^iba^{j-1}$ and still obtain the same coverage; this contradicts that $\rho$ is a shortest path (note that we have actually only assumed so far that $\rho$ has $2$ turns). If $j=i$ then we can replace the prefix $c^iba^idc$ in $\rho^\prime$ with $dc^iba^{i-1}$ so as to obtain a contradiction (we have still actually only assumed that $\rho$ has $2$ turns). Hence, we must have that $j> i$. Suppose that $j\geq l > j-i$. We can replace the prefix $c^iba^jdc^l$ in $\rho^\prime$ with $a^{j-i}dc^jba^{j-l}$ so as to obtain a contradiction (we have still actually only assumed that $\rho$ has $2$ turns). Hence, $j>i$ and either $l \leq j-i$ or $l > j$. 

Suppose that $l > j$. We can replace the prefix $c^iba^jdc^l$ in $\rho^\prime$ with $a^{j-i}dc^l$ so as to obtain a contradiction (we have still actually only assumed that $\rho$ has $2$ turns). Hence, we must have that $j >i$ and $l \leq j-i$. However, if we replace $\rho^\prime$ with $c^iba^jdc^{l-1}$ then we obtain a contradiction (here we do use the fact that $\rho$ has at least $3$ turns). So, $\rho$ has at most $2$ turns and if it has $2$ turns then $\rho$ is of the form $c^iba^jdc^l$ where $j >i$ and $l \leq j-i$.

We can say more if $\rho$ has $2$ turns. Suppose that $j\geq k-1$. The b-move can be deleted from $\rho^\prime$ and we obtain a contradiction. Hence, if $\rho$ has $2$ turns then $\rho$ is of the form $c^iba^jdc^l$ where $k-1 > j >i\geq 1$ and $1\leq l \leq j-i$. We can visualize $\rho$ as in Fig.~\ref{2turns}(\emph{i\/}). The marked cycle $G_{n,k}(src^\prime,dst^\prime)$ is shown as a cycle where a black node denotes a node of $B$; that is, a node that needs to be covered by some path in $G_{n,k}(src^\prime,dst^\prime)$ (from $src^\prime$ to $dst^\prime$, with $0=src^\prime\neq dst^\prime=x$ in this illustration). The path $\rho$ is depicted as a dotted line partitioned into composite moves.\smallskip

\begin{figure}[t]
\centering
\includegraphics[width=0.8\linewidth]{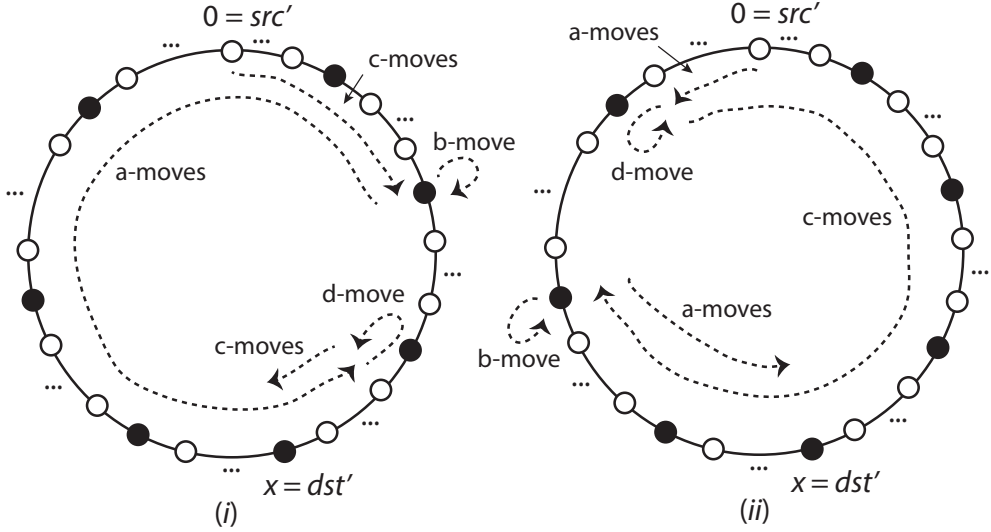}
\caption{Visualizing paths with $2$ turns.}\label{2turns}
\end{figure}

\noindent\underline{Case (\emph{b\/})}: Suppose that $\rho$ is of form (1) and has a prefix $\rho^\prime$ of the form $dc^iba^jdc^lba$, where $i,j,l\geq 1$.\smallskip

\noindent If $j \leq i$ then we can replace the prefix $dc^iba^jdc$ in $\rho^\prime$ with $dc^iba^jc$ so as to obtain a contradiction, and if $j > i$ then we can delete the first d-move from $\rho$ to obtain a contradiction. Hence, if $\rho$ starts with a d-move then it has at most $1$ turn.\smallskip

\noindent\underline{Case (\emph{c\/})}: Suppose that $\rho$ is of form (2) and has a prefix $\rho^\prime$ of the form $a^idc^jba^ldc$, where $i,j,l\geq 1$.\smallskip

\noindent If $j< i$ then we can replace the prefix $a^idc^jba$ in $\rho^\prime$ with $a^idc^{j-1}$ so as to obtain a contradiction. If $i=j$ then we can replace the prefix $a^idc^iba$ in $\rho^\prime$ with $ba^idc^{i-1}$ so as to obtain a contradiction. Hence, $j > i$. 

Suppose that $j \geq l > j-i$. We can replace the prefix $a^idc^jba^l$ in $\rho$ with $c^{j-i}ba^jdc^{j-l}$ so as to obtain a contradiction. Suppose that $l > j$. We can delete the first occurrence of a d-move in $\rho$ so as to obtain a contradiction. Hence, $l \leq j-i$. Note that if $\rho$ has $2$ turns then $\rho$ is of the form $a^idc^jba^l$ where $j > i$ and $l \leq j-i$. Alternatively, suppose that $\rho$ has at least $3$ turns. We can replace the prefix $a^idc^jba^ldc$ in $\rho$ with $a^idc^jbc^{l-1}$ so as to obtain a contradiction. Hence, $\rho$ has at most $2$ turns.

We can say more if $\rho$ has $2$ turns. Suppose that $j\geq k-1$. The d-move can be deleted from $\rho^\prime$ and we obtain a contradiction. Hence, if $\rho$ has $2$ turns then $\rho$ is of the form $a^idc^jbd^l$ where $k-1>j >i\geq 1$ and $1 \leq l \leq j-i$. We can visualize $\rho$ as in Fig.~\ref{2turns}(\emph{ii\/}).\smallskip

\noindent\underline{Case (\emph{d\/})}: Suppose that $\rho$ is of form (2) and has a prefix $\rho^\prime$ of the form $ba^idc^jba^ldc$, where $i,j,l\geq 1$.\smallskip

\noindent If $j\leq i$ then we can replace the prefix $ba^idc^jba$ with $ba^idc^ja$ so as to obtain a contradiction, and if $j > i$ then we can delete the first b-move from $\rho$ to obtain a contradiction. Hence, if $\rho$ starts with a b-move then it has at most $1$ turn.\smallskip

So, we have proven the following lemma.

\begin{lemma}\label{lem:turns}If $\rho$ is a shortest path (from $src^\prime$ to $dst^\prime$) in $G_{n,k}(src^\prime,dst^\prime)$ then $\rho$ has at most $2$ turns, and if $\rho$ has $2$ turns then it must be of the form $c^iba^jdc^l$ or $a^idc^jba^l$, where $k-1 > j > i \geq 1$ and $1 \leq l\leq j-i$.
\end{lemma}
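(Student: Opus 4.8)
The plan is to build directly on the structural classification already obtained in the previous subsection: any shortest path $\rho$ of length at least $3$ must be of form (1) or (2), that is, a string over $\{a,b,c,d\}$ consisting of alternating maximal runs $c^{\,*}$ and $a^{\,*}$ separated by single $b$- and $d$-moves that play the role of turns (an $a$-turn is the $b$ in a subsequence $cba$, a $c$-turn is the $d$ in a subsequence $adc$). Paths of length at most $2$ cannot contain any turn and so trivially satisfy the bound, so the whole content lies with the longer paths. Since the number of turns is exactly the number of such internal $b$/$d$ moves, bounding the turns amounts to bounding how many times $\rho$ reverses direction. I would argue by contradiction: assuming $\rho$ has at least $3$ turns forces it to contain, near its start, one of a small number of windows displaying three consecutive turns. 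Up to the form (1)/(2) dichotomy and whether the initial move is itself a turn-move ($d$ or $b$) or the start of a run, these windows are the four prefixes $c^iba^jdc^lba$, $dc^iba^jdc^lba$, $a^idc^jba^ldc$ and $ba^idc^jba^ldc$, with $i,j,l\ge 1$.

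The core of the argument is an exchange technique: for each window I would exhibit a rewriting of a prefix of $\rho$ by a strictly shorter string that covers exactly the same set of nodes of $G_{n,k}(src^\prime,dst^\prime)$ and leaves $\rho$ terminating at $dst^\prime$, contradicting minimality. Which rewriting applies is governed by comparing the run lengths, so the work splits into the cases $j<i$, $j=i$, and $j>i$, with the last subdividing further according to whether $l\le j-i$, $j-i<l\le j$, or $l>j$. In each subcase the substitution either cancels a superfluous back-and-forth (for instance replacing $c^iba^jdc$ by $c^iba^{j-1}$ when $j<i$) or re-routes the excursion so that the same marked nodes are still covered with fewer moves; verifying that coverage is preserved is precisely what forces the case split. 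The two boundary windows (where $\rho$ begins with $d$ or with $b$) are handled separately and shown to collapse to at most one turn, since the leading turn-move can then be deleted or absorbed.

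Having established that $\rho$ has at most $2$ turns, the remaining task is to pin down the shape of a $2$-turn path. The same form analysis shows it must read $c^iba^jdc^l$ (form (1)) or $a^idc^jba^l$ (form (2)), and the run-length comparisons performed above already yield $j>i$ and $l\le j-i$, while non-emptiness of the runs gives $i,l\ge 1$. The only genuinely new point is the upper bound $j<k-1$: here I would invoke the observation that the node at which a turn occurs must be marked, combined with a wrap-around count on the cycle of length $k$ — if the middle $a$-run were long enough, namely $j\ge k-1$, it would already sweep past the node covered by the $b$-turn, rendering that $b$-move redundant and again contradicting minimality.

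The main obstacle I anticipate is not the bookkeeping of the inequalities but the design and verification of the coverage-preserving rewritings: each must be checked to cover exactly the nodes of $B$ that the original excursion covered, including the forced-to-be-marked turn nodes, and to terminate at the correct column $dst^\prime$ on the cycle. Getting the case boundaries right — in particular the delicate regime $j>i$ with $l$ close to $j-i$ — is what makes the argument nontrivial, whereas the final characterization of $2$-turn paths follows almost mechanically once these exchange steps are in place.
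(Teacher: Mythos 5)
Your proposal follows essentially the same route as the paper's own proof: the same four prefix windows ($c^iba^jdc^lba$, $dc^iba^jdc^lba$, $a^idc^jba^ldc$, $ba^idc^jba^ldc$), the same exchange/rewriting technique with the same subcase split on run lengths ($j<i$, $j=i$, $j>i$ with $l\le j-i$, $j-i<l\le j$, $l>j$), the collapse of the $d$- and $b$-initial windows to at most one turn, and the same marked-node wrap-around argument to force $j<k-1$. The plan is correct; what remains is only the routine verification of each coverage-preserving substitution, which the paper carries out explicitly.
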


With reference to Fig.~\ref{2turns}, the numerical constraints in Lemma~\ref{lem:turns} mean that there is no interaction or overlap involving the $2$ turns in $\rho$.

\section{An optimal routing algorithm for DPillar}\label{sec:algorithm}

We now develop an optimal single-path routing algorithm for DPillar, based around Lemma~\ref{lem:turns}. We do this by finding a small set $\Pi$ of paths (from $src^\prime$ to $dst^\prime$) in $G_{n,k}(src^\prime,dst^\prime)$ so that at least one of these paths is a shortest path (and consequently we obtain a shortest path in the DCN DPillar$_{n,k}$). By Lemma~\ref{lem:sym}, we may assume that $src=(0,00\ldots0)$ and $dst=(x,v_{k-1}v_{k-2}\ldots v_0)$, and by Lemma~\ref{lem:turns}, we may assume that any shortest path has at most $2$ turns.

Our technique is as follows. Essentially, we want to make the set $\Pi$ as small as possible; that is, we want our resulting algorithm to have to consider as few paths as possible (when looking for the shortest). Lemma~\ref{lem:turns} precisely describes the set of paths we need to consider from the paths involving exactly $2$ turns; of course, we also need to consider paths involving $1$ or $0$ turns (if they exist). There are different situations depending upon the distribution of the marked nodes needing to be covered; in particular, upon the distribution of marked nodes along the natural clockwise and anti-clockwise paths from the source to the destination on the marked cycle, assuming the source and destination to be distinct (this is the case in Section~\ref{distinctsourcedest}; the case when the source and destination are the same is considered in Section~\ref{samesourcedest}). Sometimes the distribution of marked nodes rules out the possibility of certain types of paths.

\subsection{Building our set of paths when $x \neq 0$}\label{distinctsourcedest}

We first suppose that $0\neq x$. Let $B=\{i: 0\leq i\leq k-1, v_i\neq 0\}$ (that
is, the bit-positions that need to be `fixed'). Suppose that $B\setminus\{0,x\}
= \{i_l: 1 \leq l \leq r\}\cup\{j_l: 1 \leq l \leq s\}$ so that we have $0 < j_s
< j_{s-1} < \ldots < j_1 < C < i_1 < i_2 < \ldots < i_r < k$ (we might have that
either $r$ or $s$ is $0$, when the corresponding set is empty). If $r \geq 2$
then define $\delta_l = i_{l+1}-i_l$, for $l=1,2,\ldots,r-1$, with
$\delta=\max\{\delta_l:l=1,2,\ldots,r-1\}$; and if $s\geq 2$ then define
$\epsilon_l = j_l-j_{l+1}$, for $l=1,2,\ldots,s-1$, with
$\epsilon=\max\{\epsilon_l:l=1,2,\ldots,s-1\}$. Also: define $\Delta_0 = 1$
(resp.\ $0$), if $0\in B$ (resp.\ $0\not\in B$); and $\Delta_x = 1$ (resp.\
$0$), if $x\in B$ (resp.\ $x\not\in B$). We can visualize the resulting marked
cycle $G_{n,k}(0,x)$ as in Fig.~\ref{setup}(\emph{i\/}). Note that in this
particular illustration $0\not \in B$ and $x\in B$; so, $\Delta_0=0$ and
$\Delta_x=1$. Of course, what we are looking for is a sequence of (a-, b-, c-
and d-)moves that will take us from $0$ to $x$ in $G_{n,k}(0,x)$ so that all
nodes of $B$ have been covered.

\begin{figure}[t]
\centering
\includegraphics[width=.8\linewidth]{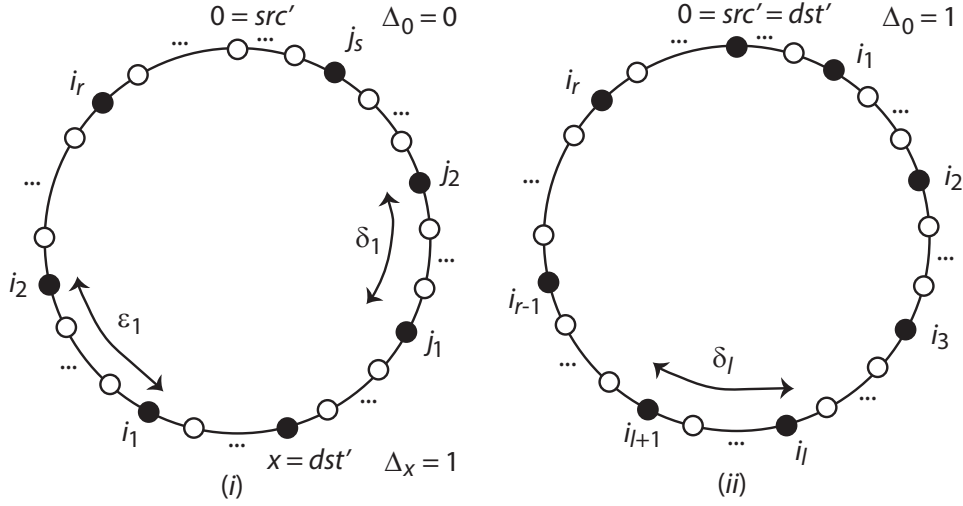}
\caption{Visualizing our notation.}\label{setup}
\end{figure}

In what follows, we examine different scenarios involving the number of marked nodes, $r$, and also the number of marked nodes, $s$. Each scenario for $r$ contributes certain paths to $\Pi$ as does each scenario for $s$. Note that perhaps the most obvious paths to consider as potential members of $\Pi$ are the paths $c^{k+x}$ and $a^{2k-x}$ which have lengths $k+x$ and $2k-x$, respectively. So, we begin by setting $\Pi=\{c^{k+x},a^{2k-x}\}$.

From Lemma~\ref{lem:turns}, any shortest path $\rho$ from $0$ to $x$ having $2$ turns requires that $r\geq 2$ or $s\geq 2$ and that both nodes at which these turns occur are different from $0$ and $x$ and lie on the anti-clockwise path from $0$ to $x$ or on the clockwise path from $0$ to $x$, accordingly. Recall also that the node at which any turn occurs on a shortest path $\rho$ is necessarily a marked node (irrespective of the number of turns in $\rho$).\smallskip

\noindent\underline{Case (\emph{a\/})}: Suppose that $r=0$.\smallskip

\noindent In this scenario, we contribute either the path $c^xb$ to $\Pi$, if $x\in B$, or the path $c^x$ to $\Pi$, if $x\not\in B$; either way, the length of the path contributed is $x+\Delta_x$.\smallskip

\noindent\underline{Case (\emph{b\/})}: Suppose that $s=0$.\smallskip

\noindent In this scenario, we contribute either the path $ba^{k-x}$ to $\Pi$, if $0\in B$, or the path $a^{k-x}$ to $\Pi$, if $0\not\in B$; either way, the length of the path contributed is $k-x+\Delta_0$.\smallskip

\noindent\underline{Case (\emph{c\/})}: Suppose that $r=1$.\smallskip

\noindent In this scenario, we contribute $2$ paths to $\Pi$. If $x\in B$ then we contribute the path $a^{k-i_1-1}dc^{k-i_1-1+x}b$ to $\Pi$, or if $x\not\in B$ then we contribute the path $a^{k-i_1-1}dc^{k-i_1-1+x}$ to $\Pi$; either way, the length of the resulting path is $ 2k-2i_1+x-1+\Delta_x$. We also contribute the path $c^{i_1}ba^{i_1-x}$ to $\Pi$ of length $2i_1-x+1$. There is potentially another path when $i_1=x+1$ and $x\in B$, namely $a^{k-x-1}dc^{k-1}$, but the length of this path is $2k-x-1$ which is greater than $2k-x-3+\Delta_x$, which in turn is $ 2k-2i_1+x-1+\Delta_x$ evaluated with $i_1=x+1$.\smallskip

\noindent\underline{Case (\emph{d\/})}: Suppose that $s=1$.\smallskip

\noindent In this scenario, we contribute $2$ paths to $\Pi$. If $0\in B$ then we contribute the path $ba^{k-j_1-1}dc^{x-j_1-1}$ to $\Pi$, or if $0\not\in B$ then we contribute the path $a^{k-j_1-1}dc^{x-j_1-1}$ to $\Pi$; either way, the length of the resulting path is $k-2j_1+x-1+\Delta_0$. We also contribute the path $c^{j_1}ba^{k+j_1-x}$ to $\Pi$ of length $k+2j_1-x+1$. There is potentially another path when $j_1=1$ and $0\in B$, namely $a^{k-1}dc^{x-1}$, but the length of this path is $k+x-1$ which is greater than $k+x-3+\Delta_0$, which in turn is $k-2j_1+x-1+\Delta_0$ evaluated with $j_1=1$.\smallskip

\noindent\underline{Case (\emph{e\/})}: Suppose that $r\geq 2$.\smallskip

\noindent In this scenario, we contribute $r+1$ paths to $\Pi$. For each $l\in\{1,2,\ldots,r-1\}$, we contribute the path $a^{k-i_{l+1}-1} dc^{k-i_{l+1}-1+i_l}ba^{i_l-x}$ to $\Pi$ of length $2k-2\delta_l-x$. If $x\in B$ then we contribute the path $a^{k-i_1-1}dc^{k-i_1-1+x}b$ to $\Pi$, or if $x\not\in B$ then we contribute the path $a^{k-i_1-1}dc^{k-i_1-1+x}$ to $\Pi$; either way, the length of the path is $2k-2i_1+x-1+\Delta_x$. We also contribute the path $c^{i_r}ba^{i_r-x}$ to $\Pi$ of length $2i_r-x+1$. (These last $2$ paths mirror those constructed in Case (\emph{c\/}).)\smallskip

\noindent\underline{Case (\emph{f\/})}: Suppose that $s\geq 2$.\smallskip

\noindent In this scenario, we contribute $s+1$ paths to $\Pi$. For each $l\in\{1,2,\ldots,s-1\}$, we contribute the path $c^{j_{l+1}} ba^{j_{l+1}+k-j_l-1}da^{x-j_l-1}$ to $\Pi$ of length $k-2\epsilon_l+x$. If $0\in B$ then we contribute the path $ba^{k-j_s-1}dc^{x-j_s-1}$ to $\Pi$, or if $0\not\in B$ then we contribute the path $a^{k-j_s-1}dc^{x-j_s-1}$ to $\Pi$; either way, the length of the path is $k-2j_s+x-1+\Delta_0$. We also contribute the path $c^{j_1}ba^{j_1+k-x}$ to $\Pi$ of length $k+2j_1-x+1$. (These last $2$ paths mirror those constructed in Case (\emph{c\/}).)\smallskip

Thus, our set $\Pi$ of potential shortest paths contains $r+s+2$ paths (from which at least one is a shortest path).

\subsection{Building our set of paths when $x = 0$}\label{samesourcedest}

Now we suppose that $x=0$. We proceed as we did above and build a set $\Pi$ of potential shortest paths. Let $B=\{i: 0\leq i\leq k-1, v_i\neq 0\}$. Suppose that $B\setminus\{0\} = \{i_l: 1 \leq l \leq r\}$ so that we have $0 < i_1 < i_2 < \ldots < i_r < k$ (we might have that $r$ is $0$ when the corresponding set is empty). If $r \geq 2$ then define $\delta_l = i_{l+1}-i_l$, for $l=1,2,\ldots,r-1$, with $\delta=\max\{\delta_l:l=1,2,\ldots,r-1\}$. We define $\Delta_0=1$, if $0\in B$, and $\Delta_0=0$, if $0\not\in B$. We can visualize the resulting marked cycle $G_{n,k}(0,0)$ as in Fig.~\ref{setup}(\emph{ii\/}). Again, the most obvious path to consider is $c^k$ (or $a^k$) which has length $k$. We begin by setting $\Pi=\{c^k\}$. \smallskip

\noindent\underline{Case(\emph{a\/})}: Suppose that $r=0$.\smallskip

\noindent In this scenario, we contribute the path $b$ of length $1$ (note that in this case the node $0$ is necessarily marked as we originally assumed that we started with distinct source and destination servers in the DCN DPillar$_{n,k}$).\smallskip

\noindent\underline{Case(\emph{b\/})}: Suppose that $r=1$.\smallskip

\noindent If $i_1=k-1$ then we contribute the path $bd$, if $0\in B$, and the path $d$, if $0\not\in B$; either way, the path has length $1+\Delta_0$. If $1=i_1\neq k-1$ then we contribute the path $cba$ of length $3$. If $1\neq i_1 \neq k-1$ then we contribute $2$ paths. The first of these paths is the path $ba^{k-i_1-1}dc^{k-i_1-1}$, if $0\in B$, and the path $a^{k-i_1-1}dc^{k-i_1-1}$, if $0\not\in B$; either way, this path has length $2k-2i_1-1+\Delta_0$. The second of these paths is the path $c^{i_1}ba^{i_1}$ of length $2i_1+1$.\smallskip

\noindent\underline{Case(\emph{c\/})}: Suppose that $r\geq 2$.\smallskip

\noindent In this scenario, we contribute $r+1$ paths to $\Pi$. For each $l\in\{1,2,\ldots,r-1\}$, we contribute the path $a^{k-i_{l+1}-1} dc^{k-i_{l+1}-1+i_l}ba^{i_l}$ to $\Pi$ of length $2k-2\delta_l$. If $0\in B$ then we contribute the path $ba^{k-i_1-1}dc^{k-i_1-1}$ to $\Pi$, or if $0\not\in B$ then we contribute the path $a^{k-i_1-1}dc^{k-i_1-1}$ to $\Pi$; either way, this path has length $2k-2i_1-1+\Delta_0$. We also contribute the path $c^{i_r}ba^{i_r}$ to $\Pi$ of length $2i_r+1$. (These last $2$ paths mirror those constructed in Case (\emph{b\/}).)\smallskip

Thus, our set $\Pi$ of potential shortest paths contains at most $r+1$ paths (from which at least one is a shortest path).

\subsection{Our algorithm}

We now use our set $\Pi$ of potential shortest paths so as to find a shortest path or the length of a shortest path. Our algorithm, DPillarMin, for finding the length of a shortest path in $G_{n,k}(0,x)$ is as follows. 
\begin{tabbing}
\hspace{0.2in}\={\tt \underline{Algorithm}:} DPillarMin\\
\>{\tt calculate $B$}\\
\>{\tt if $0\neq x$ then}\\
\>\hspace{0.2in}\={\tt $L = \min\{k+x,2k-x\}$}\\
\>\>{\tt calculate $r$, $s$, $\delta$, $\epsilon$, $\Delta_0$ and $\Delta_x$}\\
\>\>{\tt if $r=0$ then $L = \min\{L,x+\Delta_x\}$}\\
\>\>{\tt if $s=0$ then $L = \min\{L,k-x+\Delta_0\}$}\\
\>\>{\tt if $r=1$ then}\\
\>\>\hspace{0.2in}\={\tt $L = \min\{L,2k-2i_1+x-1+\Delta_x,$}\\
\>\>\>\hspace{1.7in}{\tt $2i_1-x+1\}$}\\
\>\>{\tt if $s=1$ then}\\
\>\>\>{\tt $L = \min\{L,k-2j_1+x-1+\Delta_0,$}\\
\>\>\>\hspace{1.7in}{\tt $k+2j_1-x+1\}$}\\
\>\>{\tt if $r\geq 2$ then}\\
\>\>\>{\tt calculate $\delta$}\hspace{0.1in}\% only need consider max. $\delta_l$\\
\>\>\>{\tt $L = \min\{L,2k-2\delta-x,$}\\
\>\>\>\hspace{0.65in}{\tt $2k-2i_1+x-1+\Delta_x,2i_r-x+1\}$}\\
\>\>{\tt if $s\geq 2$ then}\\
\>\>\>{\tt calculate $\epsilon$}\hspace{0.15in}\% only need consider max. $\epsilon_l$\\
\>\>\>{\tt $L = \min\{L, k-2\epsilon+x,k-2j_s+x-1+\Delta_0,$}\\
\>\>\>\hspace{1.8in}{\tt $k+2j_1-x+1\}$}\\
\>{\tt else}\\
\>\>{\tt calculate $r$ and $\delta$}\\
\>\>{\tt if $r=0$ then $L = 1$}\\
\>\>{\tt if $r=1$ then}\\
\>\>\>{\tt if $i_1=k-1$ then $L=1+\Delta_0$}\\
\>\>\>{\tt if $1=i_1\neq k-1$ then $L=3$}\\
\>\>\>{\tt if $1 \neq i_1 \neq k-1$ then}\\
\>\>\>\hspace{0.2in} $L=\min\{2k-2i_1-1+\Delta_0,2i_1+1\}$\\
\>\>{\tt if $r\geq 2$ then}\\
\>\>\>{\tt $L = \min\{k,2k-2\delta,2k-2i_1-1+\Delta_0,2i_r+1\}$}\\
\>{\tt output $L$}
\end{tabbing}

If we wish to output a shortest path then all we do is apply the algorithm DPillarMin but remember which shortest path corresponds to the final value of $L$ and output this shortest path (note that there may be more than one shortest path; exactly which path one obtains depends upon how one implements checking the paths of $\Pi$). The time complexity of both algorithms is clearly $O(k)$; that is, linear in the number of columns. Henceforth, we assume that the algorithm DPillarMin outputs an actual shortest path.

It should be clear (using Lemma~\ref{lem:turns}) that the different considerations for $r$ and $s$ exhaust all possibilities and that consequently the set of paths $\Pi$ considered by DPillarMin is such as to contain a shortest path. Hence, DPillarMin clearly outputs a shortest path from some source node to some destination node in DPillar$_{n,k}$. In summary, we have the following result.

\begin{theorem}
Suppose that $n,k\geq 2$ so that $n$ is even. The algorithm DPillarMin takes as input any two servers of DPillar$_{n,k}$, a source and a destination, and outputs a shortest path from the source server to the destination server; moreover, it computes this path with time complexity $O(k)$.
\end{theorem}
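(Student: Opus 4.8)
The plan is to establish correctness and complexity separately, with correctness being the substantive part. For correctness, I must show two things: first, that the set $\Pi$ constructed in Sections~\ref{distinctsourcedest} and \ref{samesourcedest} always contains a genuine shortest path from $src'$ to $dst'$ in the marked cycle; and second, that $\Pi$ is correctly searched by the minimization in DPillarMin, so that the returned $L$ equals the length of a shortest path. Since Section~\ref{sec:routing} establishes that a shortest path in $G_{n,k}(0,x)$ corresponds to a shortest path in DPillar$_{n,k}$ (with equal length), it suffices to work entirely in the marked cycle and invoke Corollary~\ref{lem:sym} to reduce to $src=(0,00\ldots0)$.

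First I would argue that $\Pi$ contains a shortest path. By Lemma~\ref{lem:turns}, any shortest path $\rho$ has at most two turns, and if it has exactly two turns it must be of the form $c^iba^jdc^l$ or $a^idc^jba^l$ with the stated numeric constraints. I would therefore stratify the argument by the number of turns. For the zero- and one-turn cases, the candidate paths $c^{k+x}$, $a^{2k-x}$, and the various single-turn paths contributed in Cases~(\emph{a\/})--(\emph{d\/}) must be shown to exhaust the possible shapes of such shortest paths (after the simplifications from Table~\ref{tablemoves} and the $bd$/$db$ normalizations). For the two-turn case, the key is that any turn must occur at a marked node (established in Section~\ref{sec:turns}), so the turning points are forced to lie among the $i_l$ and $j_l$; a path of the form $a^{k-i_{l+1}-1}dc^{\cdots}ba^{i_l-x}$ realizes a two-turn path whose ``saved'' length is governed by the gap $\delta_l=i_{l+1}-i_l$, and the shortest such path uses the maximal gap $\delta$. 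I would verify that each contributed path genuinely covers $B$ (every marked node is covered by some move) and genuinely ends at $x$, then confirm that for every admissible shape permitted by Lemma~\ref{lem:turns} there is a representative in $\Pi$ of equal or shorter length --- the length computations already displayed in each Case do exactly this bookkeeping, so this reduces to checking that the enumeration over $r\in\{0,1,\geq 2\}$ and $s\in\{0,1,\geq 2\}$ is exhaustive and that no shortest-path shape is omitted. The $x=0$ analysis of Section~\ref{samesourcedest} is handled by the parallel argument, noting that when $x=0$ the node $0$ is necessarily marked since $src\neq dst$ as servers.

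The main obstacle I expect is precisely the exhaustiveness of the two-turn case: I must be certain that among all two-turn paths of the forms in Lemma~\ref{lem:turns}, the one minimizing length is captured by choosing the maximal gap $\delta$ (respectively $\epsilon$) together with the correct ``anchor'' turns at $i_1$/$i_r$ (respectively $j_1$/$j_s$), and that combining a two-turn segment on the clockwise side with structure on the anticlockwise side never produces a shorter path outside $\Pi$. The remark following Lemma~\ref{lem:turns} --- that the numeric constraints force no interaction or overlap between the two turns --- is what guarantees that the clockwise-side and anticlockwise-side contributions decouple, so that I may minimize over $r$-scenarios and $s$-scenarios independently; I would lean on this decoupling to reduce the two-turn analysis to the one-sided length formulas already computed.

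Finally, for the complexity claim I would observe that computing $B$ requires a single $O(k)$ pass over the row-index of $dst$; computing $r$, $s$, $\Delta_0$, $\Delta_x$ and the extremal gaps $\delta$, $\epsilon$ each requires one more $O(k)$ scan; and evaluating the constant number of $\min$ expressions in DPillarMin is $O(1)$ per branch. Hence the total running time is $O(k)$, as claimed (suppressing the $\log n$ factor needed merely to represent individual bit-values, exactly as noted for DPillarSP). Reconstructing an actual shortest path rather than just its length costs only a constant-factor overhead, since we simply remember which candidate of $\Pi$ attained the minimum and emit its (explicitly described) move-sequence.
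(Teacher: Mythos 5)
Your proposal is correct and follows essentially the same route as the paper's own argument: reduce to $src=(0,0\ldots0)$ via Corollary~\ref{lem:sym}, pass to the marked cycle, invoke Lemma~\ref{lem:turns} to restrict shortest-path shapes, check that the case analysis over $r$ and $s$ building $\Pi$ is exhaustive (using maximal gaps $\delta$, $\epsilon$ for the two-turn paths and the fact that turns occur only at marked nodes), and observe the $O(k)$ cost of the scans and minimizations. Indeed, the paper's proof is little more than the assertion that this exhaustiveness ``should be clear'' from Lemma~\ref{lem:turns}, so your plan, if anything, spells out the verification steps (coverage of $B$, correct endpoint, decoupling of the two sides) in more detail than the paper does.
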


We can confirm that we have undertaken experiments so as to empirically check, using a breadth-first search, the correctness of DPillarMin on DPillar$_{n,k}$ when $n$ and $k$ are relatively small. We undertook our experiments using our in-house simulator \emph{INRFlow\/} \cite{INRFlow}.

\subsection{The diameter of DPillar}

We also compute the diameter of the DCN DPillar$_{n,k}$, \emph{i.e.}, the maximum of the lengths of shortest paths joining any two distinct servers. All that was stated in \cite{LYY12} was that the diameter of the DCN DPillar$_{n,k}$ is a `linear function of $k$'.

\begin{theorem}\label{thm:diam}
If $k\in\{2,3\}$ then the DCN DPillar$_{n,k}$ has diameter $k$; and if $k\geq 4$ then the DCN DPillar$_{n,k}$ has diameter $k+\lfloor\frac{k}{2}\rfloor-2$.
\end{theorem}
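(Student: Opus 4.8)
The plan is to reduce the statement entirely to the apparatus already built for DPillarMin. By Corollary~\ref{lem:sym} the digraph is node-symmetric, so the diameter is the maximum, over all destinations $(x,v_{k-1}\ldots v_0)$, of the shortest-path length from the fixed source $(0,00\ldots0)$; and by the correctness of DPillarMin this length is exactly the value $L$ the algorithm returns, which depends only on $x$ and the marked set $B$ (through $r,s,\delta,\epsilon,\Delta_0,\Delta_x$ and the extreme positions $i_1,i_r,j_1,j_s$). Hence the theorem amounts to evaluating $\max_{(x,B)}L(x,B)$, a finite optimisation over the path-length expressions collected in Sections~\ref{distinctsourcedest}--\ref{samesourcedest}. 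Write $D=k+\lfloor k/2\rfloor-2$ for the target value.

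For the lower bound I would exhibit a single extremal configuration and run the formulas. The natural candidate is $x=\lfloor k/2\rfloor$ together with $B=\{0,1,\ldots,k-1\}$ (every bit needs fixing). This choice is bad for two independent reasons: it makes the two ``around'' paths $c^{k+x}$ and $a^{2k-x}$ as balanced and as long as possible (both near $\tfrac{3k}{2}$), and it forces every gap $\delta_l,\epsilon_l$ to equal $1$, so the turn-path savings $2\delta,2\epsilon$ are as small as they can be. Evaluating $L$ for this configuration (using the Case~(e)/(f) expressions when a side has at least two marked nodes and Case~(c)/(d) when it has one) gives $L=D$ for $k\ge 4$ — I have checked $k=4,5,6$ by hand — and the analogous all-marked configuration gives $L=k$ for $k\in\{2,3\}$, where the numerical constraint $k-1>j>i\ge1$ of Lemma~\ref{lem:turns} prevents any $2$-turn path from existing at all.

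For the upper bound I would show $L(x,B)\le D$ for every $(x,B)$ (and $L\le k$ for $k\in\{2,3\}$). The starting observation cuts the problem down to $O(1)$ cases: since $L\le\min\{k+x,\,2k-x\}$ via the around paths, and $k+x\le D$ whenever $x\le\lfloor k/2\rfloor-2$ while $2k-x\le D$ whenever $x\ge\lceil k/2\rceil+2$, only the at-most-four residual values $x\in\{\lfloor k/2\rfloor-1,\ldots,\lceil k/2\rceil+1\}$ need examination. For each such $x$ I would case-split on $r$ and $s$ exactly as DPillarMin does: when a side is sparse ($r=0$ or $r=1$, and dually for $s$) the direct paths of Cases~(a)--(d) are already short (for $r=1$ the two Case~(c) lengths average to $k+\tfrac{\Delta_x}{2}$, forcing their minimum below $k\le D$); and when a side is dense ($r\ge2$ or $s\ge2$) a turn path of length $2k-2\delta-x$ or $k-2\epsilon+x$ supplies the required saving once the relevant gap is at least $2$. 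Crucially, one must weigh both the clockwise ($s$-side) and anti-clockwise ($r$-side) families together, since (as the $k=5$, $x=2$ all-marked case shows) the decisive saving can come from the opposite side to the one the shorter around path favours.

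The main obstacle is precisely this final case analysis. It is tedious rather than deep, but it is genuinely tight: in the balanced, densely-marked regime that realises the diameter, the around paths are honestly of length $\approx\tfrac{3k}{2}$ and the only path beating them does so by exactly $2$, so several of the inequalities hold with equality and leave no slack. This tightness is sensitive to the $\pm1$ contributions of $\Delta_0,\Delta_x$ and to floor/ceiling parity when $k$ is odd (where, for instance, the anti-clockwise dip path can exceed $D$ by one and the saving must be recovered from a minimal-gap path on the other side), so the bookkeeping across the handful of residual $x$-values and the $(r,s)$-regimes must be carried out carefully to confirm that the minimum of the $\Pi$-lengths never exceeds $D$.
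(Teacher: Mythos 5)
Your proposal is correct in substance but proves the theorem by a genuinely different route than the paper. Both arguments begin identically: node-symmetry (Corollary~\ref{lem:sym}) fixes the source at $(0,0\ldots 0)$, and the correctness of DPillarMin reduces the diameter to $\max_{(x,B)} L(x,B)$ over the formula-minima of Sections~\ref{distinctsourcedest} and~\ref{samesourcedest}. The paper then makes one further observation that you miss, and it does essentially all of the work: for fixed $x$, the all-marked configuration dominates every other one, because a path in $G_{n,k}(0,x)$ covering every node is a valid (if not shortest) path for any marked set $B$; hence $L(x,B)\le L(x,\{0,1,\ldots,k-1\})$ for every $B$. This collapses the two-parameter optimisation to a one-parameter one: the paper just evaluates the all-marked formulas (for central $x$ they reduce to $\min\{k+x-2,\,2k-x-2\}$, since $\delta=\epsilon=1$, $i_1=x+1$, $i_r=k-1$, $j_1=x-1$, $j_s=1$, $\Delta_0=\Delta_x=1$), maximises over $x$ to get $k+\lfloor k/2\rfloor-2$ at $x=\lfloor k/2\rfloor$ or $\lceil k/2\rceil$, and disposes of the extreme $x$ and of $k\le 4$ by inspection. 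Your plan instead proves the upper bound by direct case analysis over all $(x,B)$: the around-path bound $L\le\min\{k+x,2k-x\}$ confines attention to $x\in\{\lfloor k/2\rfloor-1,\ldots,\lceil k/2\rceil+1\}$, sparse sides ($r\le 1$ or $s\le 1$) give paths of length at most $k\le D$, and dense sides give gap paths. I can confirm your deferred bookkeeping does close: even when every gap equals $1$, the gap paths have lengths $k-2+x$ (if $s\ge 2$) and $2k-2-x$ (if $r\ge 2$), and since every integer $x$ satisfies $x\le\lfloor k/2\rfloor$ or $x\ge\lceil k/2\rceil$, on whichever side the inequality favours either that side is dense and its gap path is $\le D$, or it is sparse and the Case~(a)--(d) bounds apply; this is exactly the tightness-by-$2$ phenomenon you describe. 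Two caveats: your lower bound should be stated as the general all-marked evaluation (which coincides with the paper's computation), not merely the hand-checks for $k\in\{4,5,6\}$; and your whole $(r,s)$-by-$x$ analysis could be deleted by importing the paper's domination observation, which is the cleaner of the two routes.
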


\begin{proof}Let $src$ and $dst$ be nodes of the digraph DPillar$_{n,k}$. W.l.o.g. we may assume that the column-index of $src$ is $0$ and that of $dst$ is $x$. We work in $G_{n,k}(0,x)$ and in the context of the algorithm DPillarMin.

We first note that for any $x$, the worst-case scenario is when all nodes of $G_{n,k}(0,x)$ are marked as a shortest path in this scenario yields a path in any other scenario (though not necessarily a shortest one). Hence, in what follows we assume that all nodes are marked.\medskip

\noindent\underline{Case (\emph{a\/})}: $k\geq 5$.\smallskip

\noindent We consider first the case when $x\neq 0$. There are $5$ different scenarios for $(r,s)$: $(0,\geq 2)$; $(1,\geq 2)$: $(\geq 2,\geq 2)$; $(\geq 2,1)$; and $(\geq 2,\geq 2)$.

Consider first when $r\geq 2$ and $s\geq 2$. By consideration of the algorithm DPillarMin, where we have $\delta=1$, $i_1=x+1$, $i_r=k-1$, $\epsilon=1$, $j_s=1$, $j_1=x-1$, $\Delta_0=1$ and $\Delta_x=1$, we immediately see that $L = \min\{k+x,2k-x,2k-2-x,2k-2(x+1)+x,2(k-1)-x+1,k-2+x,k-2+x,k+2(x-1)-x+1\} = \min\{k+x-2,2k-x-2\}$. We are trying to find a value of $x$ that maximizes this minimum value. If $k+x-2 \geq 2k-x-2$ then $x \geq \frac{k}{2}$; so, in this situation this minimum value is maximized when $x=\lceil\frac{k}{2}\rceil$ and this minimum value is then $2k-\lceil\frac{k}{2}\rceil-2 = k+\lfloor\frac{k}{2}\rfloor-2$. If $k+x-2 \leq 2k-x-2$ then $x \leq \frac{k}{2}$; so, in this situation this minimum value is maximized when $x=\lfloor\frac{k}{2}\rfloor$ and this minimum value is then $k+\lfloor\frac{k}{2}\rfloor-2$.

In each of the other $4$ cases for $(r,s)$, where $x\in\{0,1,k-2,k-1\}$, we have that the length of the path produced by DPillarMin is trivially less than $k+\lfloor\frac{k}{2}\rfloor-2$ (simply look at the initial minimization $L=\min\{k+x,2k-x\}$). Also, when $x=0$ the length of the path produced is trivially less than  $k+\lfloor\frac{k}{2}\rfloor-2$. Hence, when $k\geq 5$ the dameter is $k+\lfloor\frac{k}{2}\rfloor-2$.\smallskip

\noindent\underline{Case (\emph{b\/})} $2\leq k\leq 4$.\smallskip

\noindent It is trivial to see by hand that the diameter in this case is $k$. The result follows.
\end{proof}

\section{Experimental work}
Whilst we have obtained an optimal single-path routing algorithm for DPillar (optimal in that our algorithm always outputs a shortest path), as yet we have no idea as to how often the single-path routing algorithm DPillarSP is sub-optimal and the savings to be made by employing our optimal algorithm. To undertake a precise analytical evaluation of this question would be challenging; consequently, we proceed to evaluate empirically the most important performance metrics, namely path length, aggregate bottleneck throughput and transmission latency.

We undertake our evaluation using simulation. We use our own flow-based framework \emph{INRFlow\/} \cite{INRFlow}. The reason we adopt a simulation-based evaluation is as follows. Future DCNs are intended to incorporate hundreds of thousands, if not millions, of processors. Consequently, building a test-bed of servers (bearing in mind realistic access to resources) would only yield a DCN with a handful of servers and there would be no grounds for believing that any such evaluation would scale up. For instance, in order to build even the smallest meaningful DPillar$_{n,k}$ would require that $n$ should be at least $6$ and $k$ at least $3$ which would result in a test-bed with $81$ servers which is beyond our means. Not surprisingly, simulation is the standard evaluation mechanism in the literature. Of the DCNs mentioned in this paper, FiConn, MCube, HCN, BCN, SWKautz, SWCube, and SWdBruijn were all evaluated using simulation with only DCell, BCube, and CamCube evaluated using test-beds, incorporating $20$, $16$ and $27$ servers, respectively. In addition, the aspects of symmetry present in DCNs ameliorates the likelihood of `random' aspects of the network topology having an unexpected impact upon performance when compared with more unstructured networks. Finally, as regards our evaluation of communication latency in Section~\ref{sec:latency}, we have incorporated realistic measurements of protocol stack, propagation, data transmission, and routing latencies into our analysis.

\subsection{Path Length}

In order to obtain some idea of the practical significance of our algorithm DPillarMin in terms of path length, we undertook the following experiment. For specific values of $n$ and $k$, we measured the average length of the paths obtained by employing both DPillarMin and DPillarSP for every possible source-destination pair (node-symmetry means that we can actually fix a unique source node) as well as the cumulative frequencies of the lengths of paths arising. We also measured the number of such occasions when the path derived by DPillarSP is longer than the path derived by DPillarMin; that is, the number of times DPillarSP produced a non-minimal path. Our results are shown in Table~\ref{dbar} and Table~\ref{cfds}. In Table~\ref{dbar}, the columns denote (in order): the parameters $n$ and $k$ of the particular DPillar$_{n,k}$ that we are working with; the number of servers in that DPillar$_{n,k}$; the average path lengths obtained from inputting every possible source-destination pair to the algorithms DPillarMin and DPillarSP; the improvement in terms of average path length obtained by employing DPillarMin as a percentage of the average path length obtained by employing DPillarSP; and the percentage of source-destination pairs where the optimal path length is shorter than that obtained by employing DPillarSP. In Table~\ref{cfds}, for each chosen $n$ and $k$ we show the cumulative frequencies of the lengths of paths obtained by employing the two algorithms DPillarSP and DPillarMin. These cumulative frequencies are shown as percentages of the total number of pairs of (not necessarily distinct) servers and are rounded to the nearest $0.1$\% (in order to save space we do not show data relating to all pairs of $n$ and $k$; this omitted data is as might be expected).

\begin{table}[ht]
\caption{Average path lengths: DPillarMin vs. DPillarSP.}
\centering
\begin{tabular}{| c | c | c@{\hspace{3pt}} | @{\hspace{1pt}}c@{\hspace{1pt}} | @{\hspace{1pt}}c@{\hspace{1pt}} | @{\hspace{1pt}}c@{\hspace{1pt}} | @{\hspace{1pt}}c@{\hspace{1pt}} |}
\hline
\multicolumn{2}{| c |}{DPillar$_{n,k}$} & \# of & av. pth. len. & av. pth. len. & av. length & non-min.\\
\cline{1-2} $n$ & $k$ & servers & \hspace{3pt}DPillarMin\hspace{3pt} & \hspace{3pt}DPillarSP\hspace{3pt} & improve. & paths\\
\hline
16 & 3 & 1,536 & 2.72 & 3.86 & 29\% & 66\% \\
16 & 4 & 16,384 & 3.74 & 5.36 & 30\% & 73\% \\
16 & 5 & 163,840 & 4.77 & 6.86 & 30\% & 78\% \\\hline
32 & 3 & 12,288 & 2.86 & 3.93 & 27\% & 67\% \\
32 & 4 & 262,144 & 3.87 & 5.43 & 28\% & 74\% \\\hline
48 & 3 & 41,472 & 2.9 & 3.96 & 26\% & 67\% \\\hline
64 & 3 &  98,304  & 2.93 & 3.97 & 26\% & 67\% \\\hline
80 & 3 & 192,000 & 2.94 & 3.97 & 25\% & 67\% \\\hline
128 & 3 & 786,432 & 2.96 & 3.98 & 25\% & 67\%  \\\hline
\end{tabular}
\label{dbar}
\end{table}

\begin{table}[ht]
\caption{Cumulative frequencies of path lengths: DPillarMin vs. DPillarSP.}
\centering
\begin{tabular}{| c | c | c | @{\hspace{0pt}}c@{\hspace{2pt}} | @{\hspace{2pt}}c@{\hspace{2pt}} | @{\hspace{2pt}}c@{\hspace{2pt}} | @{\hspace{2pt}}c@{\hspace{2pt}} | @{\hspace{2pt}}c@{\hspace{2pt}} | @{\hspace{2pt}}c@{\hspace{2pt}} | @{\hspace{2pt}}c@{\hspace{2pt}} | @{\hspace{2pt}}c@{\hspace{2pt}} | @{\hspace{2pt}}c@{\hspace{2pt}} |@{\hspace{2pt}}c@{\hspace{2pt}} |}
\hline
\multicolumn{2}{| c |}{DPillar$_{n,k}$} & rout. & \multicolumn{10}{ c |}{path lengths}\\
\cline{1-2} \cline{4-13} $n$ & $k$ & alg. & \hspace{2pt}0 & 1 & 2 & 3 & 4 & 5 & 6 & 7 & 8 & 9\\
\hline
16 & 3 & SP & \hspace{2pt}0.1 & 0.6 & 4.8 & 38.0 & 70.8 & 100 & $-$ & $-$ & $-$ & $-$ 
\\
16 & 3 & Min & \hspace{2pt}0.1 & 2.0 & 26.2 & 100 & $-$ & $-$ & $-$ & $-$ & $-$ & $-$ 
\\\hline
16 & 5 & SP & \hspace{2pt}0.0 & 0.0 & 0.0 & 0.4 & 2.9 & 22.9 & 42.9 & 62.8 & 82.5 & 100 
\\
16 & 5 & Min & \hspace{2pt}0.0 & 0.0 & 0.3 & 2.5 & 20.3 & 100 & $-$ & $-$ & $-$ & $-$ 
\\\hline
32 & 4 & SP & \hspace{2pt}0.0 & 0.0 & 0.1 & 1.7 & 26.7 & 51.7 & 76.6 & 100 & $-$ & $-$ 
\\
32 & 4 & Min & \hspace{2pt}0.0 & 0.0 & 0.7 & 12.0 & 100 & $-$ & $-$ & $-$ & $-$ & $-$ 
\\\hline
80 & 3 & SP & \hspace{2pt}0.0 & 0.0 & 0.0 & 33.3 & 67.4 & 100 & $-$ & $-$ & $-$ & $-$ 
\\
80 & 3 & Min & \hspace{2pt}0.0 & 0.1 & 5.7 & 100 & $-$ & $-$ & $-$ & $-$ & $-$ & $-$ 
\\\hline
128 & 3 & SP & \hspace{2pt}0.0 & 0.0 & 0.5 & 33.9 & 67.2 & 100 & $-$ & $-$ & $-$ & $-$ 
\\
128 & 3 & Min & \hspace{2pt}0.0 & 0.0 & 3.6 & 100 & $-$ & $-$ & $-$ & $-$ & $-$ & $-$ 
\\\hline

\end{tabular}
\label{cfds}
\end{table}

As can be seen from Table~\ref{dbar}, using the algorithm DPillarMin yields a very significant improvement of between $25\%$ and $30\%$ in terms of the average path length. It is also worth highlighting that the number of non-optimal paths generated by DPillarSP is between $66\%$ and $78\%$ and increases significantly with $k$. Note that a reduction in path length does not only mean that the latency experienced by network traffic should be reduced (more on this later) but also that each flow will require less aggregate bandwidth to be transmitted and so the overall throughput of the network should also increase. As can be seen from Table~\ref{cfds}, in each of the chosen scenarios DPillarMin yields significant cumulative improvements in path length. For example, when $n=16$ and $k=5$, with DPillarSP only 22.9\% of all paths have length at most $5$ whereas with DPillarMin all paths do. We measure next the aggregate bottleneck throughput obtained through using the two different routing algorithms.

\subsection{Aggregate Bottleneck Throughput}
The \emph{aggregate bottleneck throughput}, or simply \emph{ABT}, is a metric introduced in \cite{GLL09} in order to estimate the throughput performance of a DCN. The reasoning behind ABT is that the performance of an all-to-all operation is limited by its slowest flow, \emph{i.e.}, the flow with the lowest throughput.  The ABT is defined as the total number of flows times the throughput of the \emph{bottleneck flow}, \emph{i.e.}, the link sustaining the most flows. In our experiments the bottleneck flow is determined experimentally using actual routing functions within our framework \emph{INRFlow\/}. We assume an all-to-all communication pattern, so that there are $N(N - 1)$ flows, and a bandwidth of $1$ unit per directional link, where $N$ is the total number of servers. Since datacenters are most commonly used as stream processing platforms and are therefore bandwidth limited, this is an extremely relevant performance metric.

\begin{table}[ht]
\caption{Aggregate bottleneck throughput: DPillarMin vs. DPillarSP.}
\centering
\begin{tabular}{| c | c| c@{\hspace{3pt}} | @{}c@{} | @{}c@{} | c |}
\hline
\multicolumn{2}{| c |}{DPillar$_{n,k}$} & \# of & ABT & ABT & ABT \\
\cline{1-2} $n$ & $k$ & servers & \hspace{3pt}DPillarMin\hspace{3pt} & \hspace{3pt}DPillarSP\hspace{3pt} & improve. \\
\hline
16	&	3	&	1,536	&	757.16	&	397.93	&	90\%	\\
16	&	4	&	16,384	&	6077.88	&	3056.72	&	99\%	\\
16	&	5	&	163,840	&	52953.26	&	23883.38	&	122\%	\\\hline
32	&	3	&	12,288	&	5651.85	&	3126.72	&	81\%	\\
32	&	4	&	262,144	&	92102.69	&	48276.98	&	91\%	\\\hline
48	&	3	&	41,472	&	18634.09	&	10472.73	&	78\%	\\\hline
64	&	3	&	98,304	&	43653.56	&	24761.71	&	76\%	\\\hline
80	&	3	&	192,000	&	84659.97	&	48362.72	&	75\%	\\\hline
128	&	3	&	786,432	&	343097.99	&	197595.98	&	74\%	\\\hline
\end{tabular}
\label{abt}
\end{table}

Table \ref{abt} shows that DPillarMin is capable of offering much higher ABT than DPillarSP in all cases, with improvements of between 74\% and 122\% (the right-most is the improvement in ABT by using DPillarMin rather than DPillarSP divided by the ABT of DPillarSP). Informally, this means that bandwidth-limited  applications such as, for example, Big Data analytics, running over a DCN using DPillarMin might be able to achieve nearly twice as much computational throughput as the same application running over a DCN using DPillarSP. This can provide significant savings in terms of running and maintenance costs associated with each application and thus will result in more competitive pricing for tenants. Furthermore, as applications run faster it will be possible to run more applications in a given time frame and so there is a huge potential for increasing the overall profit of the datacenter.

\subsection{Communication Latency} \label{sec:latency}
Not all datacenter applications are bandwidth sensitive; indeed, many of them are more sensitive to latency, such as real-time operations or, more generally, any application interfacing with users. For this reason, it is important that we look at the transmission latency that we can expect from DPillarSP and DPillarMin. As there is no server-centric DCN framework available that will enable us to perform testbench experiments (building one ourselves is not possible), we measure the latencies imposed by the different steps of the transmission, namely within the protocol stack, propagation latency, data transmission latency and routing at the servers, so as to obtain an estimate of how changing the routing algorithm would affect the overall performance. Our experiments were as follows.

\begin{itemize}
	\item We measured the round trip time of both an \emph{empty} frame (28 bytes for the headers) and a \emph{full} frame (1,500 bytes, including the headers) sent to localhost so as to measure the latency imposed by going up and down the protocol stack. In both cases, the \emph{stack latency}, $L_{s}$, was found to be 10 $\mu$s.
	\item We measured the round trip of an empty frame sent to a neighbouring server connected to the same Gigabit Ethernet switch. This was found to be 64 $\mu$s; thus we can compute the one-way \emph{propagation latency}, $L_p$, \emph{i.e.}, the time to go through the links and the switch, by dividing by two and removing the stack latency. This yields a propagation latency of 22 $\mu$s.
	\item We measured the round trip time of a full-frame sent to the same neighbouring server. This was found to be 140 $\mu$s; thus the one-way \emph{data transfer latency}, $L_{d}$, can be calculated similarly by dividing by two and subtracting the stack latency as well as the propagation latency. This results in a data transfer latency of 38 $\mu$s (roughly 26 ns per byte).
	\item We measured the \emph{average routing latency}, $L_{r}$, for both algorithms for a selection of the configurations above (those with between 8 and 40 thousand servers). Note that the code for the two algorithms is not optimised and that it includes some overheads imposed by our framework; so the running times for the algorithms can be considered as a worst case. 
\end{itemize}
Consequently, for both DPillarSP and DPillarMin we obtain the \emph{per-hop latency} $L_{hop} = L_s + L_p + L_d + L_r$ along with the \emph{server-to-server} latency $L_{total} = L_{hop}\times \bar{d}$, where $\bar{d}$ is the average path length.

All the measurements were carried out under low load conditions in the same server: a 32-core AMD Opteron 6220 with 256 Gbytes of RAM and running an Ubuntu 14.04.1 SMP OS. Round-trip time measurements were carried out with the ping utility. The server and its neighbour are located within the same rack and are connected with short (at most $1$ metre) electrical wires to a 24-port 1-Gbit Ethernet switch which does not support jumbo frames (we do not have 10-Gbit Ethernet hardware available). Note that the use of short wires is the best case for the propagation delay, as in a real scale-out datacenter wires will be much longer and so propagation delays will be larger (even if fibre connections are used~\cite{PD11}). Similarly, the measured latency of the protocol stack does not take into account any extra management/control inherent to the server-centric nature of the system; so, again, it can be considered a best case scenario. Increasing these delays will dilute the effects of the average routing latency in the total latency even more than in our preliminary estimate. (We remind the reader that a DPillar datacenter would be constructed out of COTS hardware and so our experimental set-up is reasonable).

\begin{table}[ht]
\caption{Average routing latencies: DPillarMin vs. DPillarSP.}
\centering
\begin{tabular}{| c | c | c@{\hspace{1pt}} | @{\hspace{1pt}}c@{\hspace{1pt}} | @{\hspace{1pt}}c@{\hspace{1pt}} | @{\hspace{2pt}}c@{\hspace{2pt}} |}
\hline
\multicolumn{2}{| c |}{DPillar$_{n,k}$} & \# of & $L_{r}$ & $L_{r}$ & $L_{r}$ \\
\cline{1-2} $n$ & $k$ & servers & \hspace{3pt}DPillarMin\hspace{3pt} & \hspace{3pt}DPillarSP\hspace{3pt} & increase  \\
\hline
16	&	4	&	16,384	&	5.964 $\mu$s	&	1.349 $\mu$s	&	442\% \\
32	&	3	&	12,288	&	3.325 $\mu$s	&	0.960 $\mu$s	&	346\% \\
48	&	3	&	41,472	&	3.328 $\mu$s	&	0.859 $\mu$s	&	387\% \\
\hline
\end{tabular}
\label{runtime}
\end{table}

\begin{table}[ht]
\caption{Per-hop and overall latencies: DPillarMin vs. DPillarSP.}
\centering
\begin{tabular}{| @{\hspace{3pt}}c@{\hspace{3pt}} | @{\hspace{1pt}}c@{\hspace{1pt}} | c@{} | @{}c@{} | @{\hspace{1pt}}c@{\hspace{1pt}} | @{}c@{} | @{}c@{} | @{\hspace{1pt}}c@{\hspace{1pt}} |}
\hline
\multicolumn{2}{| @{}c@{} |}{DPillar$_{n,k}$} & $L_{hop}$ & $L_{hop}$ & $L_{hop}$ & $L_{total}$ & $L_{total}$ & $ L_{total}$ \\
\cline{1-2} $n$ & $k$ & \hspace{2pt}\scriptsize DPillarMin\hspace{2pt} & \hspace{2pt}\scriptsize DPillarSP\hspace{2pt} & decl. & \hspace{2pt}\scriptsize DPillarMin\hspace{2pt} & \hspace{2pt} \scriptsize DPillarSP\hspace{2pt} & improve. \\
\hline
16	&	4	&	76.0	&	71.3	&	6\%	&	284.1	&	382.2	&	26\%	\\
32	&	3	&	73.3	&	71.0	&	3\%	&	209.5	&	279.1	&	25\%	\\
48	&	3	&	73.3	&	70.9	&	3\%	&	212.9	&	280.4	&	24\%	\\
\hline
\end{tabular}
\label{latency}
\end{table}

Table~\ref{runtime} shows the average routing latency $L_r$ for DPillarSP and DPillarMin, along with the increase in the average routing latency when using DPillarMin as opposed to DPillarSP (shown as a percentage of the average routing latency when using DPillarSP). Table~\ref{latency} details the per-hop and server-to-server latencies for both DPillarSP and DPillarMin. The very slight increase in the per-hop latency when using DPillarMin as opposed to DPillarSP is shown, as is the improvement in the server-to-server latency when using DPillarMin as opposed to DPillarSP (both are shown as a percentage of the corresponding value for DPillarSP). 

It can be seen that the average routing latency for DPillarMin is between 3.4 and 4.5 times slower than that for DPillarSP, but of the order of only a few microseconds which is well below the other latencies measured in our experimental set-up. In consequence, the per-hop latencies of DPillarSP and DPillarMin are very similar; however, there is a significant reduction in server-to-server latency for DPillarMin over DPillarSP (between 24\% and 26\%) when the reductions in average path-length are factored in. 

Informal analytical modelling using the values above as a reference suggests that if jumbo frames were used then there would be negligible increase in per-hop latency so as to yield a significant overall improvement in server-to-server latency of up to 30\%. A similar analytic analysis using 10-Gbit Ethernet hardware suggests that while per-hop latency can increase by up to 12.5\% when using DPillarMin as opposed to DPillarSP, the overall server-to-server latency improvement will still be in the range 19-23\%. Finally, the estimates with jumbo frame-enabled 10-Gbit Ethernet yield very similar results as the ones presented here. Full details are available in the supplemental material.

While the latency analysis performed here is rather simplistic and only covers zero-load latencies, our objective is not to provide highly accurate latency figures but to show that the impact of the routing algorithm DPillarMin on latency is insignificant, particularly when compared with the huge gains in terms of path length and throughput. Note that due to its less favourable throughput, the use of DPillarSP would lead to additional queuing in the servers which would in have a detrimental impact upon performance.

\subsection{FiConn and DCell}

There does not exist a proper comparative experimental evaluation of the numerous (dual-port) server-centric DCNs in the literature; comparative evaluations that have been undertaken so far are somewhat \emph{ad hoc}, both in terms of the DCNs compared and the performance metrics evaluated. Of course, an extensive comparative evaluation will be a significant body of work and is well beyond the scope of this paper (where our focus has been on improving routing in DPillar). Moreover, we are fully aware that there are many different metrics for DCN evaluation, such as those relating to fault-tolerance, bisection bandwidth, load balancing, latency, throughput, scalability, and so on, and that the eventual success of a DCN will usually depend on its capacity to cope well across a range of such metrics. Nevertheless, we end our experimentation by including an interesting prelude to a fuller analysis of routing within server-centric DCNs: we briefly compare routing in DPillar with routing in the two DCNs DCell and FiConn.

We have chosen FiConn and DCell as they are widely regarded as benchmark server-centric DCNs. Like DPillar, FiConn is  dual-port, whereas DCell is such that the number of server NIC ports is variable. The reader is referred to \cite{LGW09} and \cite{GWT08} for definitions of FiConn and DCell, respectively, but just as with DPillar, FiConn and DCell are families of DCNs parameterized by $n$, the number of switch-ports in a switch, and $k$, the depth of the recursive construction (actually, a server in DCell$_{n,k}$ has $k+1$ NIC ports).

In Table~\ref{dpfcdc}, we have displayed the average path length and the ABT of (various instantiations of) DPillar with the routing algorithm DPillarMin, FiConn with the routing algorithm TOR (from \cite{LGW09}), and DCell with the routing algorithm DCellRouting (from \cite{GWT08}); we have chosen these instantiations so that the different DCNs can be compared on three different bases, namely them all having roughly 24K, 117K and 170K servers, respectively (so, we have normalized against the number of servers). As usual, the data in Table~\ref{dpfcdc} has been derived using our tool INRFlow. 

\begin{table}[ht]
\caption{Average path lengths and ABT: DPillarMin vs. FiConn vs. DCell}
\centering
\begin{tabular}{| c | c | c@{\hspace{3pt}} | @{\hspace{1pt}}c@{\hspace{1pt}}  | @{\hspace{1pt}}c@{\hspace{1pt}}  |}
\hline
\multicolumn{2}{| c |}{DPillar$_{n,k}$} & \# of & av. pth. len. & ABT \\
\cline{1-2} $n$ & $k$ & servers & \hspace{3pt}DPillarMin\hspace{3pt} & \hspace{3pt}DPillarMin\hspace{3pt}\\
\hline
12 & 5 & 38,880 & 4.68 & 12805.63 \\
16 & 5 & 163,840 & 4.77 & 52952.94\\
18 & 4 & 26,244 & 3.77 & 9616.46\\
26 & 4 & 114,244 & 3.84 & 40637.47 \\
48 & 3 & 41,472 & 2.90 & 18633.64\\
64 & 3 &  98,304 & 2.93 & 43653.12\\
\hline
\multicolumn{2}{| c |}{FiConn$_{n,k}$} & \# of & av. pth. len. & ABT \\
\cline{1-2} $n$ & $k$ & servers & \hspace{3pt}TOR\hspace{3pt} & \hspace{3pt}TOR\hspace{3pt}\\
\hline
10 & 3 & 116,160 & 12.97 & 13026.18 \\
24 & 2 & 24,648 & 6.56 & 5005.47 \\
36 & 2 &  117,648  &  6.71   & 23694.75 \\
40 & 2 & 177,240 & 6.74 & 35650.59\\
\hline
\multicolumn{2}{| c |}{DCell$_{n,k}$} & \# of & av. pth. len. & ABT \\
\cline{1-2} $n$ & $k$ & servers & \hspace{3pt}DCellRouting\hspace{3pt} & \hspace{3pt}DCellRouting\hspace{3pt}\\
\hline
3 & 3 & 24,492 & 10.18 & 5475.43 \\
4 & 3 & 176,820 & 11.29 & 33582.97\\
12 & 2 & 24,492 & 6.34 &  6968.73\\
18 & 2 &  117,306  &  6.56   & 31937.10\\
\hline
\end{tabular}
\label{dpfcdc}
\end{table}

As can readily be seen, DPillar compares extremely well with FiConn and DCell in terms of both average path length and ABT (even though DCell would appear to have a natural advantage over the other two DCNs as it involves servers with more than two NIC ports).

We end with some comments on our very brief comparative evaluation. First, we reiterate that what is really required is an extensive evaluation involving a range of server-centric DCNs across a range of performance metrics. Second, we observe (in comparison with data in Tables~\ref{dbar}~and~\ref{abt}) that the improvements made in using DPillarMin in the DCN DPillar, rather than DPillarSP, have resulted in moving DPillar from only comparable with DCell and FiConn to better than DCell and FiConn (at least in terms of average paths length and ABT). Third, there is no reason why a closer combinatorial scrutiny of both DCell and FiConn might not result in new and better routing algorithms than DCellRouting and TOR, respectively (just as we have improved routing within DPillar within this paper).

\section{Conclusions}\label{sec:conclusions}

In this paper we have: developed an optimal and practical single-path routing algorithm DPillarMin for the DCN DPillar; shown that DPillar is a Cayley graph, and so node-symmetric; and provided an exact formulation of the diameter of DPillar. Our experimental results show not only that DPillarMin can significantly reduce the average path length of network traffic (up to 30\%), but also that this reduction results in a significant increase (more than 2$\times$) in terms of overall network throughput. Finally we showed that the computational overhead of DPillarMin is negligible and will barely affect the processing of network traffic: less than a 6\% increase in per-hop latency, which is more than compensated by the reductions in path length.

In summary, we can claim that our proposed routing algorithm can unleash a massively improved performance to the DPillar DCN. Furthermore, we feel that there are other areas where efficiency gains might be made; in particular, with regard to multi-path routing. Of course, we reaffirm our statement above that what also needs to be undertaken is an holistic comparison of different (dual-port) server-centric DCNs, with their different routing algorithms and across a wide range of performance metrics, along with the combinatorial study of DCNs different to DPillar with a view to improving their routing algorithms.

\section*{Acknowledgements}

All of the authors are supported by the EPSRC grants EP/K015680/1 and EP/K015699/1 `Interconnection Networks: Practice unites with Theory (INPUT)'. Dr. Javier Navaridas is supported by the European Union's Horizon 2020 programme under grant agreement No. 671553 `ExaNeSt'. We are also grateful for the insightful comments of the reviewers which helped to significantly improve this paper.





\appendix
\renewcommand{\thesection}{A}
\setcounter{figure}{0} \renewcommand{\thefigure}{A.\arabic{figure}}
\setcounter{table}{0} \renewcommand{\thetable}{A.\arabic{table}}

\section*{Appendix}
\subsection{Proof of Lemma 1}

\begin{lemma}
The digraph DPillar$_{n,k}$ is a Cayley graph.
\end{lemma}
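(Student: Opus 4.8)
The plan is to realise DPillar$_{n,k}$ as the Cayley graph of the wreath product $G=\mathbb{Z}_m\wr\mathbb{Z}_k=(\mathbb{Z}_m)^k\rtimes\mathbb{Z}_k$, where $m=\frac{n}{2}$ and the cyclic group $\mathbb{Z}_k$ acts on the base group $(\mathbb{Z}_m)^k$ by the coordinate shift $\sigma$ with $\sigma(\mathbf{w})_i=w_{i-1}$ (indices taken mod $k$), so that $\sigma^c(\mathbf{e}_0)=\mathbf{e}_c$, where $\mathbf{e}_i$ denotes the tuple with a single $1$ in position $i$. Multiplication is then $(\mathbf{w},c)(\mathbf{w}',c')=(\mathbf{w}+\sigma^c(\mathbf{w}'),\,c+c')$, and $|G|=k\,m^k$ agrees with the number of servers. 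I would label the server $(c,v_{k-1}\ldots v_0)$ by the group element $(\mathbf{v},c)$ with $\mathbf{v}=(v_{k-1},\ldots,v_0)$; this is plainly a bijection between the node set and $G$.

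Next I would propose the generating set $S=A\cup B\cup C\cup D$ built from four families that mirror the four edge types: $C=\{(a\mathbf{e}_0,1):a\in\mathbb{Z}_m\}$, $A=\{(a\mathbf{e}_{k-1},-1):a\in\mathbb{Z}_m\}$, $B=\{(a\mathbf{e}_0,0):a\in\mathbb{Z}_m\setminus\{0\}\}$, and $D=\{(a\mathbf{e}_{k-1},0):a\in\mathbb{Z}_m\setminus\{0\}\}$. The heart of the verification is the single computation $(\mathbf{v},c)(a\mathbf{e}_0,1)=(\mathbf{v}+a\mathbf{e}_c,\,c+1)$ together with its three analogues: right-multiplying by a $C$-generator alters exactly the $c$th coordinate (to $v_c+a$, which ranges over all of $\mathbb{Z}_m$ as $a$ does) and advances the column, so it reproduces precisely the clockwise edges; right-multiplying by an $A$-, $B$-, or $D$-generator alters the $(c-1)$th, $c$th, or $(c-1)$th coordinate respectively, with the column behaviour $c-1$, $c$, $c$, exactly matching the anti-clockwise, basic static, and decremented static edges. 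Excluding $a=0$ from $B$ and $D$ removes the would-be self-loops, which is why $|B|=|D|=m-1$, giving $|S|=2m+2(m-1)=2n-2$ in agreement with the degree of DPillar$_{n,k}$.

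It then remains to confirm the two defining properties of a Cayley generating set. Closure under inverses follows from $(\mathbf{w},c)^{-1}=(-\sigma^{-c}(\mathbf{w}),-c)$: one checks that $C^{-1}=A$ (clockwise and anti-clockwise generators are mutually inverse), while $B^{-1}=B$ and $D^{-1}=D$. That $S$ generates $G$ is seen by noting that $(\mathbf{0},1)\in C$ generates the $\mathbb{Z}_k$ factor, and that conjugating a $B$-generator by powers of $(\mathbf{0},1)$ yields $(a\mathbf{e}_i,0)$ for every position $i$, and these together generate the whole base group $(\mathbb{Z}_m)^k$. Assembling these facts, the labelling is a bijection $G\to V(\text{DPillar}_{n,k})$ under which the directed edge $(u,v)$ is present exactly when $v=us$ for some $s\in S$, so DPillar$_{n,k}$ is the Cayley graph $\mathrm{Cay}(G,S)$.

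I expect the only delicate part to be the bookkeeping with conventions rather than any genuine difficulty: one must fix the direction of the shift $\sigma$ and the choice of reference positions ($\mathbf{e}_0$ versus $\mathbf{e}_{k-1}$) so that the semidirect-product twist $\sigma^c$ deposits the affected bit at position $c$ for the $B$/$C$ families and at position $c-1$ for the $A$/$D$ families. The arithmetic modulo $k$ on column indices must be tracked carefully, but the wrap-around edges joining column $k-1$ to column $0$ require no special treatment, since they are handled uniformly by the $\mathbb{Z}_k$ component of $G$.
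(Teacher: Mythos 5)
Your proof is correct, and it follows the same overall strategy as the paper: identify the node set with a group of order $k\left(\frac{n}{2}\right)^k$ and exhibit four generator families, closed under inverses, that reproduce exactly the four edge types (with the $a=0$ elements retained for the $C$/$A$ families but excluded from $B$/$D$, matching the degree $2n-2$). The difference lies in how the group is built. The paper, following Fu and Chau's cyclic-cube construction, realises the group concretely as a set of formal strings $t_i^{p_i}t_{i+1}^{p_{i+1}}\cdots t_{i-1}^{p_{i-1}}$ on which multiplication is \emph{defined} only through right-multiplication by generators and then extended by fiat to arbitrary products; associativity, the identity, and inverses then have to be checked (and the paper largely asserts them). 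You instead recognise the group abstractly as the wreath product $\mathbb{Z}_{n/2}\wr\mathbb{Z}_k=(\mathbb{Z}_{n/2})^k\rtimes\mathbb{Z}_k$ with the shift action, so the group axioms come for free from the semidirect-product formula, the inverse formula $(\mathbf{w},c)^{-1}=(-\sigma^{-c}(\mathbf{w}),-c)$ is explicit, and the whole verification collapses to the single computation $(\mathbf{v},c)(a\mathbf{e}_j,\varepsilon)=(\mathbf{v}+a\mathbf{e}_{c+j},c+\varepsilon)$ for $j\in\{0,k-1\}$, $\varepsilon\in\{0,\pm 1\}$. In fact the two groups are isomorphic -- the paper's string $t_i^{p_i}\cdots t_{i-1}^{p_{i-1}}$ corresponds to your pair $((p_{k-1},\ldots,p_0),i)$, and the paper's $S_a,S_b,S_c,S_d$ map onto your $A,B,C,D$ -- so your argument is best viewed as a cleaner, more rigorous packaging of the paper's proof: what it buys is that the existence of the group structure is never in doubt, and generation and inverse-closure become routine checks rather than claims about a bespoke multiplication.
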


\begin{proof}
We first define a set of elements and then a notion of multiplication on this set. Let $t_0,t_1,\ldots,t_{k-1}$ be distinct symbols and for each $i\in\{0,1,\ldots,k-1\}$, define
\begin{align*}
G_i = \{t_i^{p_i}t_{i+1}^{p_{i+1}}\ldots t_{k-1}^{p_{k-1}}t_0^{p_0}t_1^{p_1}\ldots t_{i-1}^{p_{i-1}} : 0\leq p_j\leq \frac{n}{2}-1, \text{ for all }j\in\{0,1,\ldots,k-1\}\}.
\end{align*}
Define $G_k^n = \cup_{i=0}^{k-1}G_i$; so, note that $|G_k^n| = k(\frac{n}{2})^k$. Define
\begin{align*}
S_a &= \{t_{k-1}^{q}t_{0}^{0} t_{1}^{0}\ldots t_{k-2}^{0} : 0\leq q\leq \frac{n}{2}-1\};\\
S_b &= \{t_{0}^{q}t_{1}^{0} \ldots t_{k-1}^{0} : 0\leq q\leq \frac{n}{2}-1\};\\
S_c &= \{t_1^{0}t_{1}^{0}\ldots t_{k-1}^{0}t_0^{q} : 0\leq q\leq \frac{n}{2}-1\};\\
S_d &= \{t_{0}^{0}t_{1}^{0} \ldots t_{k-1}^{q} : 0\leq q\leq \frac{n}{2}-1\}.
\end{align*}
Define a right-multiplication $\circ$ of the elements of $G_k^n$ by the elements of $S = S_a\cup S_b\cup S_c\cup S_d$ as follows:
\begin{align*}
t_i^{p_i}t_{i+1}^{p_{i+1}}\ldots t_{k-1}^{p_{k-1}}t_0^{p_0}t_1^{p_1}\ldots t_{i-1}^{p_{i-1}} \circ t_{k-1}^{q}t_{0}^{0} t_{1}^{0}\ldots t_{k-2}^{0} &= t_{i-1}^{p_{i-1}+ q}t_i^{p_i}t_{i+1}^{p_{i+1}}\ldots t_{k-1}^{p_{k-1}}t_0^{p_0}t_1^{p_1}\ldots t_{i-2}^{p_{i-2}}\\
t_i^{p_i}t_{i+1}^{p_{i+1}}\ldots t_{k-1}^{p_{k-1}}t_0^{p_0}t_1^{p_1}\ldots t_{i-1}^{p_{i-1}} \circ t_{0}^{q}t_{1}^{0} \ldots t_{k-1}^{0} &= t_i^{p_i+q}t_{i+1}^{p_{i+1}}\ldots t_{k-1}^{p_{k-1}}t_0^{p_0}t_1^{p_1}\ldots t_{i-1}^{p_{i-1}}\\
t_i^{p_i}t_{i+1}^{p_{i+1}}\ldots t_{k-1}^{p_{k-1}}t_0^{p_0}t_1^{p_1}\ldots t_{i-1}^{p_{i-1}} \circ t_1^{0}t_{2}^{0}\ldots t_{k-1}^{0}t_0^{q} &= t_{i+1}^{p_{i+1}}\ldots t_{k-1}^{p_{k-1}}t_0^{p_0}t_1^{p_1}\ldots t_{i-1}^{p_{i-1}}t_i^{p_i+q}\\
t_i^{p_i}t_{i+1}^{p_{i+1}}\ldots t_{k-1}^{p_{k-1}}t_0^{p_0}t_1^{p_1}\ldots t_{i-1}^{p_{i-1}} \circ t_{0}^{0}t_{1}^{0} \ldots t_{k-1}^{q} &= t_i^{p_i}t_{i+1}^{p_{i+1}}\ldots t_{k-1}^{p_{k-1}}t_0^{p_0}t_1^{p_1}\ldots t_{i-1}^{p_{i-1}+q},
\end{align*}
with $i$, $p_0$, $p_1$, $\ldots$, $p_{k-1}$ and $q$ as appropriate and with addition on superscripts modulo $\frac{n}{2}$. 

We now extend the multiplication we have just defined so that we make $G_n^k$ into a group with a generating set $S_0$ that is a subset of $S$. Let $\langle S\rangle$ be the set of all elements generated by right-multiplication by elements of $S$. It is trivial to show that this set is $G_k^n$; that is, 
\begin{align*}
G_k^n = \{((\ldots((s_1\circ s_2)\circ s_3) \ldots ) \circ s_i) : i \geq 1, s_j\in S \text{ for } j=1,2,\ldots,i\}.
\end{align*}
Extend the multiplication $\circ$ to $G_k^n$ by defining that no matter how a
multiplication of elements of $S$ is bracketed, \emph{e.g.}, as $(s_1 \circ
(s_2 \circ s_3)) \circ (s_4 \circ s_5)$, the product is defined as that obtained
by multiplying on the right, \emph{e.g.}, as $((((s_1\circ s_2) \circ s_3)
\circ s_4) \circ s_5)$. Consequently, we have now equipped $G_k^n$ with an
associative multiplication $\circ$. It is trivial to check that there is an
identity in $G_k^n$ (w.r.t. $\circ$; it is $t_0^0t_1^0\ldots t_{k-1}^0$) and
also that every element of $G_k^n$ has an inverse; furthermore, every element of
$S$ has an inverse in $S$. Hence, $G_k^n$ is a group generated by the $2n-2$
elements of $S_0 = S\setminus\{t_0^0t_1^0\ldots t_{k-1}^0\}$ and $S_0$ is closed
under inverses. Let $G_k^n(S)$ be the Cayley graph of $G_k^n$ w.r.t.\ the
generating set $S_0$.

Finally, we prove that the Cayley graph $G_k^n(S_0)$ is exactly the same as DPillar$_{n,k}$. In what follows, by DPillar$_{n,k}$ we mean the digraph DPillar$_{n,k}$. Define the mapping $\varphi$ from the nodes of $G_k^n(S_0)$ to the nodes of DPillar$_{n,k}$ by
$$\varphi(t_i^{p_i}t_{i+1}^{p_{i+1}}\ldots t_{k-1}^{p_{k-1}}t_0^{p_0}t_1^{p_1}\ldots t_{i-1}^{p_{i-1}}) = (i,p_{k-1}p_{k-2}\ldots p_0)$$
where $i$, $p_0$, $p_1$, $\ldots$, $p_{k-1}$ are as appropriate. As
\begin{align*}
  t_i^{p_i}t_{i+1}^{p_{i+1}}\ldots t_{k-1}^{p_{k-1}}t_0^{p_0}t_1^{p_1}\ldots t_{i-1}^{p_{i-1}} \circ t_{k-1}^{q}t_{0}^{0} t_{1}^{0}\ldots t_{k-2}^{0}
   = t_{i-1}^{p_{i-1}+q}t_i^{p_i}t_{i+1}^{p_{i+1}}\ldots t_{k-1}^{p_{k-1}}t_0^{p_0}t_1^{p_1}\ldots t_{i-2}^{p_{i-2}},
\end{align*}
this describes the a-edge of DPillar$_{n,k}$ from $(i,p_{k-1}p_{k-2}\ldots p_{0})$ to $(i-1,p_{k-1}p_{k-2}\ldots p_{i}(p_{i-1}+q)p_{i-2}\dots p_{0})$. As
\begin{align*}
t_i^{p_i}t_{i+1}^{p_{i+1}}\ldots t_{k-1}^{p_{k-1}}t_0^{p_0}t_1^{p_1}\ldots t_{i-1}^{p_{i-1}} \circ t_{0}^{q}t_{1}^{0} \ldots t_{k-1}^{0}
  = t_i^{p_i+q}t_{i+1}^{p_{i+1}}\ldots t_{k-1}^{p_{k-1}}t_0^{p_0}t_1^{p_1}\ldots t_{i-1}^{p_{i-1}},
\end{align*}
this describes the b-edge of DPillar$_{n,k}$ from $(i,p_{k-1}p_{k-2}\ldots
p_{0})$ to $(i,p_{k-1}p_{k-2}\ldots p_{i+1}(p_i+q)p_{i-1}\ldots p_0)$ when $q
>0$. As
\begin{align*}
  t_i^{p_i}t_{i+1}^{p_{i+1}}\ldots t_{k-1}^{p_{k-1}}t_0^{p_0}t_1^{p_1}\ldots t_{i-1}^{p_{i-1}} \circ t_1^{0}t_{1}^{0}\ldots t_{k-1}^{0}t_0^{q}
  = t_{i+1}^{p_{i+1}}\ldots t_{k-1}^{p_{k-1}}t_0^{p_0}t_1^{p_1}\ldots
  t_{i-1}^{p_{i-1}}t_i^{p_i+q},
\end{align*}
this describes the c-edge of DPillar$_{n,k}$ from $(i,p_{k-1}p_{k-2}\ldots p_{0})$ to $(i+1,p_{k-1}p_{k-2}\ldots p_{i+1}(p_i+q)p_{i-1}\dots p_{0})$. As
\begin{align*}
t_i^{p_i}t_{i+1}^{p_{i+1}}\ldots t_{k-1}^{p_{k-1}}t_0^{p_0}t_1^{p_1}\ldots t_{i-1}^{p_{i-1}} \circ t_{0}^{0}t_{1}^{0} \ldots t_{k-1}^{q}
  = t_i^{p_i}t_{i+1}^{p_{i+1}}\ldots t_{k-1}^{p_{k-1}}t_0^{p_0}t_1^{p_1}\ldots t_{i-1}^{p_{i-1}+q},
\end{align*}
this describes the d-edge of DPillar$_{n,k}$ from $(i,p_{k-1}p_{k-2}\ldots p_{0})$ to $(i,p_{k-1}p_{k-2}\ldots p_i(p_{i-1}+q)p_{i-2}\dots p_{0})$ when $q>0$. Consequently, $\varphi$ is an isomorphism of $G_k^n(S_0)$ to DPillar$_{n,k}$ and the result follows.\end{proof}

\subsection{Modelling other network configurations: jumbo frames and 10-Gbit Ethernet}
While we were unable to carry out experiments with more advanced datacenter
networking equipment, such as switches capable of dealing with jumbo frames or
10-Gbit Ethernet NICs or switches, it should be possible to extrapolate their
performance from the statistics we captured from our experimental set-up. Having
estimates for the latency expected from these configurations is useful as they
can be seen as pathological cases in relation to the performance gains inherent
to DPillarMin. Using jumbo frames (that is, frames with a payload of 9,000
bytes\footnote{Different networking equipment may have different frame length
  limits. For simplicity, we stick to a payload of 9,000 bytes, even though some
  devices can handle even larger jumbo frames, \emph{e.g.}, Cisco devices can
  typically handle up to 9,216 bytes.}, rather than the standard 1,472 bytes)
means that any routing algorithm is executed less often and that the protocol-
and propagation-induced delays become less substantial when compared with the
data transmission delay. In our case, this means that the overhead due to using
DPillarMin becomes less significant; consequently, the overall delay will be
even better than with standard frames. On the other hand, the higher bandwidth
of 10-Gbit (10$\times$) equipment means that the per-hop delay will be reduced
which, in turn, means that the time taken to undertake routing computations may
become dominant. However, according to our assessment this will not be the case.

We now explain how we extrapolate the per-hop latency and server-to-server latency for these technologies from the latencies we measured empirically in Section 7.3 (that is, $L_s$, $L_p$, $L_d$ and $L_r$).
\begin{itemize}
	\item The stack latency, $L_s$, should not change as it is due to software executions at the server-side. 
	\item The propagation latency, $L_p$, would barely be affected by the bandwidth of the links, or the size of the frames, but would be affected by the length of the links or the transmission media used (copper/fibre). For simplicity, we assume the propagation latency does not vary.
	\item The data transfer latency depends on the transmission bandwidth and the size of the frame. For simplicity, we assume perfect linear scaling of the per-byte delay: 26 ns per byte for 1-Gbit Ethernet and 2.6 ns per byte for 10-Gbit Ethernet, multiplied by either 1 standard frame (1,472 bytes) or 1 jumbo frame (9,000 bytes).
	\item There is no change to the average routing latency, $L_r$.
\end{itemize}

The extrapolation for 10-Gbit Ethernet (see Table \ref{10gbeth}) suggests that, even though the per-hop delay might go up significantly with faster networks, the great improvement in path length achieved by DPillarMin still compensates for this and provides an improvement in terms of overall latency of between 20\% and 23\%. The use of jumbo frames alleviates the overhead incurred by using DPillarMin (see Tables \ref{jumbo1gbeth} and \ref{jumbo10gbeth}) and raises the improvement in terms of overall latency up to 29\% (1-Gbit Ethernet) and 24\% (10-Gbit Ethernet).

Further informal analysis using stack and propagation delays that were one order of magnitude smaller than the ones obtained in our empirical testing, suggested that with standard frames the overall latency will still be reduced by around 20 to 23\% with 1-Gbit Ethernet and between 4\% and 16\% with 10-Gbit Ethernet in most of the cases. If jumbo frames were considered then the stack and propagation delays barely affect the overall latency so the figures remain similar to those discussed above.

\begin{table}[ht]
\caption{Per-hop and overall latencies with 10-Gbit Ethernet and standard frames.}
\centering
\begin{tabular}{| @{\hspace{3pt}}c@{\hspace{3pt}} | @{\hspace{1pt}}c@{\hspace{1pt}} | c@{} | @{}c@{} | @{\hspace{1pt}}c@{\hspace{1pt}} | @{}c@{} | @{}c@{} | @{\hspace{1pt}}c@{\hspace{1pt}} |}
\hline
\multicolumn{2}{| @{}c@{} |}{DPillar$_{n,k}$} & $L_{hop}$ & $L_{hop}$ & $L_{hop}$ & $L_{total}$ & $L_{total}$ & $L_{total}$ \\
\cline{1-2} $n$ & $k$ & \hspace{2pt}\scriptsize DPillarMin\hspace{2pt} & \hspace{2pt}\scriptsize DPillarSP\hspace{2pt} & decl. & \hspace{2pt}\scriptsize DPillarMin\hspace{2pt} & \hspace{2pt} \scriptsize DPillarSP\hspace{2pt} & improve. \\
\hline
16	&	4	&	41.76	&	37.15	&	12\%	&	156.2	&	199.0	&	22\%	\\
32	&	3	&	39.12	&	36.76	&	6\%	&	111.8	&	144.6	&	23\%	\\
48	&	3	&	39.13	&	36.66	&	7\%	&	113.6	&	145.0	&	22\%	\\
\hline
\end{tabular}
\label{10gbeth}
\end{table}

\begin{table}[ht]
\caption{Per-hop and overall latencies with 1-Gbit Ethernet and jumbo frames.}
\centering
\begin{tabular}{| @{\hspace{3pt}}c@{\hspace{3pt}} | @{\hspace{1pt}}c@{\hspace{1pt}} | c@{} | @{}c@{} | @{\hspace{1pt}}c@{\hspace{1pt}} | @{}c@{} | @{}c@{} | @{\hspace{1pt}}c@{\hspace{1pt}} |}
\hline
\multicolumn{2}{| @{}c@{} |}{DPillar$_{n,k}$} & $L_{hop}$ & $L_{hop}$ & $L_{hop}$ & $L_{total}$ & $L_{total}$ & $L_{total}$ \\
\cline{1-2} $n$ & $k$ & \hspace{2pt}\scriptsize DPillarMin\hspace{2pt} & \hspace{2pt}\scriptsize DPillarSP\hspace{2pt} & decl. & \hspace{2pt}\scriptsize DPillarMin\hspace{2pt} & \hspace{2pt} \scriptsize DPillarSP\hspace{2pt} & improve. \\
\hline
16	&	4	&	270.30	&	265.69	&	2\%	&	1011.0	&	1423.3	&	29\%	\\
32	&	3	&	267.66	&	265.30	&	1\%	&	764.6	&	1043.5	&	27\%	\\
48	&	3	&	267.67	&	265.20	&	1\%	&	777.3	&	1049.3	&	26\%	\\
\hline
\end{tabular}
\label{jumbo1gbeth}
\end{table}

\begin{table}[ht]
\caption{Per-hop and overall latencies with 10-Gbit Ethernet and jumbo frames.}
\centering
\begin{tabular}{| @{\hspace{3pt}}c@{\hspace{3pt}} | @{\hspace{1pt}}c@{\hspace{1pt}} | c@{} | @{}c@{} | @{\hspace{1pt}}c@{\hspace{1pt}} | @{}c@{} | @{}c@{} | @{\hspace{1pt}}c@{\hspace{1pt}} |}
\hline
\multicolumn{2}{| @{}c@{} |}{DPillar$_{n,k}$} & $L_{hop}$ & $L_{hop}$ & $L_{hop}$ & $L_{total}$ & $L_{total}$ & $L_{total}$ \\
\cline{1-2} $n$ & $k$ & \hspace{2pt}\scriptsize DPillarMin\hspace{2pt} & \hspace{2pt}\scriptsize DPillarSP\hspace{2pt} & decl. & \hspace{2pt}\scriptsize DPillarMin\hspace{2pt} & \hspace{2pt} \scriptsize DPillarSP\hspace{2pt} & improve. \\
\hline
16	&	4	&	61.20	&	56.58	&	8\%	&	228.9	&	303.1	&	24\%	\\
32	&	3	&	58.56	&	56.19	&	4\%	&	167.3	&	221.0	&	24\%	\\
48	&	3	&	58.56	&	56.09	&	4\%	&	170.1	&	221.9	&	23\%	\\
\hline
\end{tabular}
\label{jumbo10gbeth}
\end{table}


\begin{thebibliography}{99}

\bibitem{AMW10} D. Abts, M.R. Marty, P.M. Wells, P. Klausler and H. Liu, ``Energy Proportional Datacenter Networks'', \emph{Proc. of \emph{37\/}th Ann. Int. Symp. on Computer Architecture}, 2010, pp. 338--347.

\bibitem{ACR10} H. Abu-Libdeh, P. Costa, A. Rowstron, G. O’Shea and A. Donnelly, ``Symbiotic Routing in Future Data Centers'', \emph{SIGCOMM Computer Communication Review}, vol. 40, no. 4, 2010, pp. 51--62.

\bibitem{ABD09} J.H. Ahn, N. Binkert, A. Davis, M. McLaren and R.S. Schreiber, ``HyperX: Topology, Routing, and Packaging of Efficient Large-scale Networks'', \emph{Proc. of Conf. on High Performance Computing Networking, Storage and Analysis}, 2009, article no. 41.

\bibitem{AK89} S. B. Akers and B. Krishnamurthy, ``A Group-theoretic Model for Symmetric Interconnection Networks'', \emph{IEEE Transactions on Computers}, vol. 38, no. 4, 1989, pp. 555--566.

\bibitem{ALV08} M. Al-Fares, A. Loukissas and A. Vahdat, ``A Scalable, Commodity Data Center Network Architecture'', \emph{SIGCOMM Computer Communication Review}, vol. 38, no. 4, 2008, pp. 63--74.

\bibitem{CMB15} C. Camarero, C. Martinez and R. Beivide, ``Lattice Graphs for High-scale Interconnection Topologies'', \emph{IEEE Transactions on Parallel and Distributed Systems}, vol. 26, no. 9, 2015, pp. 2506--2519.

\bibitem{DT04} W.J. Dally and B.P. Towles, \emph{Principles and Practices of Interconnection Networks}, Morgan Kaufmann, 2004.

\bibitem{INRFlow} A. Erickson, A. Kiasari,  J. Pascual Saiz, J. Navaridas and I.A. Stewart, ``{I}nterconnection {N}etworks {R}esearch {F}low {E}valuation {F}ramework ({INRFlow})'', Feb. 16, 2016. [Software]. Available: https://bitbucket.org/alejandroerickson/inrflow. [Accessed: Feb. 22, 2016].





\bibitem{FC98} A.W.-C. Fu and S.-C. Chau. ``Cyclic-cubes: A New Family of Interconnection Networks of Even Fixed-degrees'', \emph{IEEE Transactions on Parallel and Distributed Systems}, vol. 9, no. 12, 1998, pp. 1253--1268.

\bibitem{GHJ09}  A. Greenberg, J.R. Hamilton, N. Jain, S. Kandula, C. Kim, P. Lahiri, D.A. Maltz, P. Patel and S. Sengupta, ``VL2: A Scalable and Flexible Data Center Network'', \emph{SIGCOMM Computer Communication Review}, vol. 39, no. 4, pp. 51--62, 2009.

\bibitem{GCL13} D. Guo, T. Chen, D. Li, M. Li, Y. Liu and G. Chen, ``Expandible and Cost-effective Network Structures for Data Centers using Dual-port Servers'', \emph{IEEE Transactions on Computers}, vol. 62, no. 7, 2013, pp. 1303--1317.

\bibitem{GWT08} C. Guo, H. Wu, K. Tan, L. Shi, Y. Zhang and S. Lu, ``DCell: A Scalable and Fault-tolerant Network Structure for Data Centers'', \emph{SIGCOMM Computer Communication Review}, vol. 38, no. 4, 2008, pp. 75--86.

\bibitem{GLL09} C. Guo, G. Lu, D. Li, H. Wu, X. Zhang, Y. Shi, C. Tian, Y. Zhang and S. Lu, ``BCube: A High Performance, Server-centric Network Architecture for Modular Data Centers'', \emph{SIGCOMM Computer Communication Review}, vol. 39, no. 4, 2009, pp. 63--74.

\bibitem{HSM06} B. Heller, S. Seetharaman, P. Mahadevan, Y. Yiakoumis, P. Sharma, S. Banerjee and N. McKeown, ``ElasticTree: Saving Energy in Data Center Networks'', \emph{Proc. \emph{7\/}th USENIX Conf. on Networked Systems Design and Implementation\/}'', 2006, pp. 249--264.

\bibitem{Lei92} F.T. Leighton, \emph{Introduction to Parallel Algorithms and Architectures: Arrays. Trees. Hypercubes}, Morgan Kaufmann, 1992.

\bibitem{LGW09} D. Li, C. Guo, H.Wu, K. Tan, Y. Zhang and S. Lu, ``FiConn: Using Backup Port for Server Interconnection in Data Centers'', \emph{Proc. of INFOCOM}, 2009, pp. 2276--2285.

\bibitem{LW15} D. Li and J. Wu, ``On Data Center Network Architectures for Interconnecting Dual-port Servers'', \emph{IEEE Transactions on Computers}, vol. 64, no. 11, 2015, pp. 3210--3222.

\bibitem{LYY12} Y. Liao, J. Yin, D. Yin and L. Gao, ``DPillar: Dual-port Server Interconnection Network for Large Scale Data Centers'', \emph{Computer Networks}, vol. 56, no. 8, 2012, pp. 2132--2147.


\bibitem{LMV13} Y. Liu, J.K. Muppala, M. Veeraraghavan, D. Lin and J. Katz, \emph{Data Centre Networks: Topologies, Architectures and Fault-Tolerance Characteristics}, Springer, 2013.

\bibitem{MPF09} R.N. Mysore, A. Pamboris, N. Farrington, N. Huang, P. Miri, S. Radhakrishnan, V. Subramanya and A. Vahdat, ``Portland: A Scalable Fault-tolerant Layer 2 Data Center Network Fabric'', \emph{SIGCOMM Computer Communication Review}, vol. 39, no. 4, 2009, pp. 39--50.

\bibitem{PD11} L.L. Peterson and B.S. Davie, \emph{Computer Networks, Fifth Edition\/\emph{:} A Systems Approach},
 Morgan Kaufmann, 2011.


\bibitem{SSW13} J.-Y. Shin, E.G. Sirer, H. Weatherspoon and D. Kirovski, ``On the Feasibility of Completely Wireless Datacenters'',  \emph{IEEE/ACM Transactions on Networking}, vol. 21, no. 5, 2013, pp. 1666--1679.


\bibitem{WWY10} C. Wang, C. Wang, Y. Yuan and Y. Wei, ``MCube: A High Performance and Fault-tolerant Network Architecture for Data Centers'', \emph{Proc. Int. Conf. on Computer Design and Applications}, vol. 5, 2010, pp. V5-423--V5-427.

\bibitem{XB07} W. Xiao and B. Parhami, ``A Group Construction Method with Applications to Deriving Pruned Interconnection Networks'', \emph{IEEE Transactions on Parallel and Distributed Systems}, vol. 18, no. 5, 2007, pp. 637--643.



\end{thebibliography}
\end{document}